\makeatletter\patchcmd{\@makecaption}  {\scshape}  {}  {}  {}\makeatother
\newcommand{\Deltaprod}{\xi_N}
\newtheorem{theo}{Theorem}
\newtheorem{defi}{Definition}
\newtheorem{prop}{Proposition}
\newtheorem{ex}{Example}
\def\compactify{\itemsep=0pt \topsep=0pt \partopsep=0pt \parsep=0pt}
\let\latexusecounter=\usecounter
\begin{document}

\title{Multi-Server Multi-Function Distributed Computation}
\author{Derya Malak, Mohammad Reza Deylam Salehi, Berksan Serbetci, and Petros Elia
\thanks{Derya Malak, Mohammad Reza Deylam Salehi, and Petros Elia are with the Commun. Systems Dept., EURECOM, Biot Sophia Antipolis, 06904 FRANCE (emails: \{malak,\, deylam, \, elia \}@eurecom.fr). This work was conducted when B. Serbetci was a Postdoctoral Researcher at EURECOM; fberks@gmail.com}
\thanks{This research was partially supported by a Huawei France-funded Chair towards Future Wireless Networks, and supported by the program ``PEPR Networks of the Future" of France 2030. 
Co-funded by the European Union (ERC, SENSIBILITÉ, 101077361, and ERC-PoC, LIGHT, 101101031). Views and opinions expressed are however those of the author(s) only and do not necessarily reflect those of the European Union or the European Research Council. Neither the European Union nor the granting authority can be held responsible for them.}}

\maketitle

\begin{abstract}
The work here studies the communication cost for a multi-server multi-task distributed computation framework, and does so for a broad class of functions and data statistics. Considering the framework where a user seeks the computation of multiple complex (conceivably non-linear) tasks from a set of distributed servers, we establish communication cost upper bounds for a variety of data statistics, function classes and data placements across the servers.    
To do so, we proceed to apply, for the first time here, K{\"o}rner's characteristic graph approach --- which is known to capture the structural properties of data and functions --- to the promising framework of multi-server multi-task distributed computing. Going beyond the general expressions, and in order to offer clearer insight, we also consider the well-known scenario of cyclic dataset placement and linearly separable functions over the binary field, in which case our approach exhibits considerable gains over the state of art. Similar gains are identified for the case of multi-linear functions.
\end{abstract}

\begin{IEEEkeywords}
Distributed computation; linearly separable functions; non-linear functions; functional compression; characteristic graph entropy; multi-server; multi-function; skewed statistics; and data correlations.
\end{IEEEkeywords}

\section{Introduction}
\label{sec:intro}

Distributed computing plays an increasingly significant role in accelerating the execution of computationally challenging and complex computational tasks. This growth in influence is rooted in the innate capability of distributed computing to parallelize computational loads across multiple servers. This same parallelization renders distributed computing as an indispensable tool for addressing a wide array of complex computational challenges, spanning scientific simulations, extracting various spatial data distributions \cite{yang2011using}, data-intensive analyses for cloud computing \cite{shamsi2013data}, machine learning \cite{10129894}, as well as applications in various other fields such as computational fluid dynamics \cite{gan1995evaluation}, high-quality graphics for movie and game rendering \cite{gao2019cost}, and a variety of medical applications \cite{lushbough2010overview} to name just a few. In the center of this ever-increasing presence of parallelized computing, stand modern parallel processing techniques, such as MapReduce \cite{dean2008mapreduce, 6903263, 8538364}, and Spark \cite{zaharia2010spark, khumoyun2016spark}.

%

For distributed computing though to achieve the desirable parallelization effect, there is an undeniable need for massive information exchange to-and-from the various network nodes. Reducing this communication load is essential for scalability \cite{orgerie2014survey, 10.1145/1142473.1142507,li2021flexible,liu2018distributed} in various topologies \cite{noormohammadpour2017datacenter, 179115, bestavros1995demand}. 
Central to the effort to reduce communication costs, stand coding techniques such as those found in \cite{8849431,8755563,tandon2017gradient, ye2018communication,8437467, MaddahAli2013, Karamchandani2016, LiSupAliAve2017,li2017fundamental, LiAliYuAves2018, YuAliAves2018, NadMadAve2016, subramaniam2019collaborative, dutta2019optimal, yosibash2021frame, DimGodWuWainRam2010, wan2022cache, jia2021capacity, soleymani2021analog}, including gradient coding \cite{tandon2017gradient}, and different variants of coded distributed computing that nicely yield gains in reliability, scalability, computation speed and cost-effectiveness~\cite{MaddahAli2013}.  
Similar communication-load aspects are often addressed via polynomial codes~\cite{YuAliAve2017} which can mitigate stragglers and enhance the recovery threshold, while MatDot codes, devised in \cite{dutta2019optimal,lopez2022secure} for secure distributed matrix multiplication, can decrease the number of transmissions for distributed matrix multiplication. This same emphasis in reducing communication costs is even more prominent in works like~\cite{wan2022cache, jia2021capacity, wan2021distributed,lopez2022secure, zhu2023information, 9478901,8949560, fawzi2022discovering, dutta2019optimal, aliasgari2020private, d2020notes, RasShaKum2011}, which again focus on distributed matrix multiplication. For example, focusing on a cyclic dataset placement model, the work in~\cite{wan2021distributed} provided useful achievability results, while the authors of \cite{jia2021capacity} have characterized achievability and converse bounds for secure distributed matrix multiplication. Furthermore, the work in~\cite{wan2022cache} found creative methods to exploit the correlation between the entries of the matrix product in order to reduce the cost of communication.


 %

\subsection{The multi-server multi-function distributed computing setting, and the need for accounting for general non-linear functions}
\label{sec:multi-server-function}
As computing requirements become increasingly challenging, distributed computing models have also evolved to be increasingly complex. One such recent model is the multi-server multi-function distributed computing model that consists of a master node, a set of distributed servers, and a user demanding the computation of multiple functions. The master contains the set of all datasets and allocates them to the servers which are then responsible for computing a set of specific subfunctions of datasets. 
This multi-server multi-function setting was recently studied by Wan {\em et al.} in \cite{wan2021distributed} for the class of linearly separable functions, which nicely captures a wide range of real-world tasks \cite{dean2008mapreduce} such as convolution \cite{9478901}, the discrete Fourier transform \cite{cances2023density} and a variety of other cases as well. This same work bounded the communication cost, employing linear encoding and linear decoding that leverage the structure of requests. 

At the same time though there is growing need to consider more general classes of functions, including non-linear functions such as is often the case with subfunctions that produce intermediate values in MapReduce operations~\cite{dean2008mapreduce}, or that relate to quantization \cite{hanna2020distributed}, classification \cite{luo2007distributed}, and optimization \cite{karakus2017straggler}. %
Intense interest can also be identified in the aforementioned problem of distributed matrix multiplication, which has been explored in a plethora of works that include~\cite{jia2021capacity, jia2021cross, wang2021price, chang2018capacity, d2020notes,8949560}, with a diverse focus that entails secrecy  \cite{d2020notes, chang2018capacity, jia2021cross}, as well as precision and stragglers  \cite{jia2021capacity, wang2021price, li2021flexible, 8949560} to name a few. 
In addition to matrix multiplication, other important non-linear function classes include sparse polynomial multiplication~\cite{monagan2009parallel}, permutation invariant functions \cite{hsu2021scalable} --- which often appear in multi-agent settings and have applications in learning, combinatorics, and graph neural networks --- as well as nomographic functions \cite{goldenbaum2014nomographic, goldenbaum2013harnessing} which can appear in the context of sensor networks and which have strong connections with interference exploitation and lattice codes, as nicely revealed in~\cite{goldenbaum2014nomographic, goldenbaum2013harnessing}. 

Our own work here is indeed motivated by this emerging need for distributed computing of non-linear functions, and our goal is to now consider general functions in the context of the multi-server, multi-function distributed computing framework, while also capturing dataset statistics and correlations, and while exploiting the structural properties of the (possibly non-linear) functions requested by the user. To do so, we will go beyond the linear coding approaches in~\cite{wan2021distributed, huang2023fundamental, 9678313}, and will devise demand-based encoding-decoding solutions. To do this, we will adopt --- in the context of the multi-server multi-function framework --- the powerful tools from characteristic graphs, which are specifically geared toward capturing both the statistical structure of data as well as the properties of functions beyond the linear case. To help the reader better understand our motivation and contribution, we proceed with a brief discussion on data structure and characteristic graphs. 

\subsection{Data correlation and structure}
\label{sec:char-graph}
Crucial in reducing the communication bottleneck of distributed computing, is an ability to capture the structure that appears in modern datasets. Indeed, even before computing considerations come into play, capturing the general structure of data has been crucial in reducing the communication load in various scenarios such as those in the seminal work by Slepian-Wolf \cite{SlepWolf1973} and Cover \cite{cover1975proof}.  
Similarly, when function computation is introduced, data structure can be a key component. In the context of computing, we have seen the seminal work by K{\"o}rner and Marton \cite{korner1979encode} which focused on efficient compression of the modulo $2$ sum of two statistically dependent sources, while Lalitha {\em et al.} \cite{lalitha2013linear} explored %
linear combinations of multiple statistically dependent sources. Furthermore, for general bivariate functions of correlated sources, when one of the sources is available as side information, the work of Yamamoto \cite{Yamamoto1982} generalized the pioneering work of Wyner and Ziv \cite{WynZiv1976}, to provide a rate-distortion characterization for the function computation setting. 

It is the case though that when the computational model becomes more involved --- as is the case in our multi-server multi-function scenario here --- data may often be treated as unstructured and independent~\cite{wan2020cache,wan2021distributed,huang2023fundamental,khalesi2022multi,9521491}. This naturally allows for crucial analytical tractability, but it may often ignore the potential benefits of accounting for statistical skews and correlations in data when aiming to reduce communication costs in distributed computing. Furthermore, this comes at a time when more and more function computation settings --- such as in medical imaging analysis \cite{erickson2017machine}, data fusion and group inferences~\cite{5484191}, as well as predictive modeling for artificial intelligence \cite{kant2015predictive} --- entail datasets with prominent dependencies and correlations.
While various works, such as by K\"orner-Marton \cite{korner1979encode}, Han-Kobayashi \cite{han1987dichotomy}, Yamamoto \cite{Yamamoto1982}, Alon-Orlitsky \cite{AlonOrlit1996}, Orlitsky-Roche \cite{OrlRoc2001}, provide crucial breakthroughs in exploiting data structure, to the best of our knowledge, \emph{in the context of fully distributed function computation}, the structure in functions and data has yet to be considered simultaneously.

\subsection{Characteristic graphs}
\label{sec:characteristic graph} 
To jointly account for this structure in both data and functions, we will draw from the powerful literature on \emph{Characteristic graphs}, introduced by K{\"o}rner for source coding \cite{Korner1973}, and used in data compression \cite{korner1979encode, AlonOrlit1996, OrlRoc2001, malak2022fractional, malak2023weighted, charpenay2023complementary}, cryptography \cite{salehi2023achievable}, image processing \cite{maugey2021graph}, and bioinformatics \cite{sevilla2005correlation}.
For example, toward understanding the fundamental limits of distributed functional compression, the work in \cite{Korner1973} devised the graph entropy approach in order to provide the best possible encoding rate of an information source with vanishing error probability. This same approach, while capturing both function structure and source structure, was presented for the case of one source, and it is not directly applicable to the distributed computing setting. Similarly the zero-error side information setting in \cite{AlonOrlit1996} and the lossy encoding setting in \cite{OrlRoc2001}, \cite{Yamamoto1982} use K\"orner's graph entropy \cite{Korner1973} approach to capture both function structure and source structure, but were again presented for the case of one source. Similar focus can be found in the works in~\cite{AlonOrlit1996,OrlRoc2001,malak2022fractional, malak2023weighted,salehi2023achievable}. The same characteristic graph approach has been nicely used by Feizi and M\'edard in \cite{FeiMed2014} for the distributed computing setting, for a simple distributed computing framework, and in the absence of considerations for data structure.

Characteristic graphs, which are used in fully distributed architectures to compress information, can allow us to capture various data statistics and correlations, various data placement arrangements, and various function types. This versatility motivates us to employ characteristic graphs in our multi-server, multi-function architecture for distributed computing of non-linear functions.

%

\subsection{Contributions}
\label{sec:contributions}
In this paper, leveraging fundamental principles from source and functional compression as well as graph theory, we study a general multi-server multi-function distributed computing framework composed of a single user requesting a set of functions, which are computed with the assistance of distributed servers that have partial access to datasets. To achieve our goal, we consider the use of K\"orner's characteristic graph framework~\cite{Korner1973} in our multi-server multi-function setting, and proceed to establish upper bounds on the achievable sum-rates reflecting the setting's communication requirements. 

By extending, for the first time here, K\"orner's characteristic graph framework~\cite{Korner1973} to the new multi-server multi-function setting, we are able to reflect the nature of the functions and data statistics, in order to allow each server to build a codebook of encoding functions that determine the transmitted information. Each server, using its own codebook, can transmit a function (or a set of functions) of the subfunctions of the data available in its storage, and to then provide the user with sufficient information for evaluating the demanded functions. The codebooks allow for a substantial reduction in the communication load. 

The employed approach allows us to account for general dataset statistics, correlations, dataset placement, and function classes, thus yielding gains over the state of art~\cite{wan2021distributed}, \cite{SlepWolf1973}, as showcased in our examples for the case of linearly separable functions in the presence of statistically skewed data, as well as for the case of multi-linear functions where the gains are particularly prominent, again under statistically skewed data. For this last case of multi-linear functions, we provide an upper bound on the achievable sum-rate (see Subsection~\ref{ex_prop:multi_shot_multilinear_function}), under a cyclic placement of data that reside in the binary field. 
We also provide a generalization of some elements in existing works on linearly separable functions~\cite{wan2021distributed,huang2023fundamental}.

In the end, our work demonstrates the power of using characteristic graph-based encoding for exploiting the structural properties of functions and data in distributed computing, as well as provides insights into fundamental compression limits, all for the broad scenario of multi-server, multi-function distributed computation.

%

%
%

%

%

%

\subsection{Paper organization}
\label{sec:org}
The rest of this paper is structured as follows. Section~\ref{model} describes the system model for the multi-server multi-function architecture, 
and Section~\ref{results} details the main results on the communication cost or sum-rate bounds under general dataset distributions and correlations, dataset placement models, and general function classes requested by the user, over a field of characteristic $q\geq 2$, through employing the characteristic graph approach, and contrasts the sum-rate with relevant prior works, e.g., \cite{wan2021distributed, SlepWolf1973}.
Finally, we summarize our key results and outline possible future directions in Section~\ref{sec:conclusion}. 
We provide a primer for the key definitions and results on characteristic graphs and their fundamental compression limits in Appendix~\ref{sec:Preliminary}, and give proofs of our main results in Appendix~\ref{sec:Proofs}.

{\bf Notation:} We denote by $H(X)=\mathbb{E}[-\log P_{X}(X)]$ the Shannon entropy of random variable $X$ drawn from distribution or probability mass function (PMF) $P_{X}$. 
Let $P_{X_1,X_2}$ be the joint  PMF of two random variables $X_1$ and $X_2$, where $X_1$ and $X_2$ are not necessarily independent and identically distributed (i.i.d.), i.e., equivalently the joint PMF is not in product form. 
The notation $X\sim {\rm Bern}(\epsilon)$ denotes that $X$ is Bernoulli distributed with parameter $\epsilon\in [0,1]$. 
Let $h(\cdot)$ denote the binary entropy function, and $H_B(B(n,\epsilon))$ denote the entropy of a Binomial random variable of size $n\in\mathbb{N}$, with $\epsilon\in[0,1]$ modeling the success probability of each Boolean-valued outcome. 
The notation $X_{\mathcal{S}}=\{X_i\,:\, i\in\mathcal{S}\}$ denotes a subset of servers with indices $i\in\mathcal{S}$ for $\mathcal{S}\subseteq\Omega$. The notation $\mathcal{S}^c=\Omega\backslash\mathcal{S}$ denotes the complement of $\mathcal{S}$. 
We denote the probability of an
event $A$ by $\mathbb{P}(A)$. 
The notation $1_{x\in A}$ denotes the indicator function which takes the value $1$ if $x\in A$, and $0$ otherwise. 
The notation $G_{X_i}$ denotes the characteristic graph that server $i\in\Omega$ builds for computing $F(X_{\Omega})$. 
The measures $H_{G_X}(X)$ and $H_{G_X}(X\,\vert\, Y)$ denote the entropy of characteristic graph $G_X$, and the conditional graph entropy for random variable $X$ given $Y$, respectively. 
The notation $\mathcal{T}(N, K, K_c, M, N_r)$ shows the topology of the distributed system. 
We note that $\mathcal{Z}_i$ denotes the indices of datasets stored in $i\in\Omega$, and the notation $K_n(\mathcal{S})=|\mathcal{Z}_{\mathcal{S}}|=\big|\bigcup\nolimits_{i\in \mathcal{S}} \mathcal{Z}_i\big|$  represents the cardinality of the datasets in the union of the sets in $\mathcal{S}$ for a given subset $\mathcal{S}\subseteq \Omega$ of servers. 
We also note that $[N]=\{1,2,\dots,N\}$, $N\in \mathbb{Z}^+$, and $[a:b]=\{a,a+1,\dots,b\}$ for $a,b\in\mathbb{Z}^+$ such that $a<b$. 
We use the convention $\mod{\{b,a\}}=a$ if $a$ divides $b$.  
We provide the notation in Table \ref{table:tab1}.

\begin{table*}[h!]
\footnotesize
\setlength{\extrarowheight}{1pt}
\begin{center}
\begin{tabular}{l | l }
%
%
{\bf Distributed computation system-related definitions} & {\bf Symbols}\\
\hline
Number of distributed servers; set of distributed servers; capacity of a server & $N$; $\Omega$; $M$ \\
Set of datasets; dataset catalog size & $\{D_k\}_{k\in[K]}$; $K=|\mathcal{K}|$\\
Subfunction $k\in \mathcal{Z}_i\subseteq [K]$ & $W_k=h_k(D_k)$ \\
The number of symbols in each $W_k$;  blocklength  & $L$ ; $n$\\
Set of indices of datasets assigned to server $i\in \Omega$ such that $|\mathcal{Z}_i|\leq M$ & $\mathcal{Z}_i\subseteq [K]$\\
Set of subfunctions corresponding to a subset of servers with indices $i\in\mathcal{S}$ for $\mathcal{S}\subseteq\Omega$ & $X_{\mathcal{S}}=\{X_i\,:\, i\in\mathcal{S}\}$\\
Recovery threshold & $N_r$\\
Number of demanded functions by the user & $K_c$\\
Number of symbols per transmission of server $i\in\Omega$ & $T_i$\\
Topology of the multi-server multi-function distributed computing setting  & $\mathcal{T}(N, K, K_c, M, N_r)$\\
& \\
{\bf Graph-theoretic definitions} & {\bf Symbols}\\
\hline
Characteristic graph that server $i$ builds for computing $F(X_{\Omega})$ & $G_{X_i}$, $i\in\Omega$\\
Union of characteristic graphs that server $i$ builds for computing $\{F_j(X_{\Omega})\}_{j\in [K_c]}$ & $G^{\cup}_{X_i}$, $i\in\Omega$\\
Maximal independent set (MIS); set of all MISs of $i\in\Omega$ & $U_1$; $S(G_{X_1})$ \\
A valid coloring of $G_{X_i}$ & $c_{G_{X_i}}$\\
$n$-th OR power graph; a valid coloring of the $n$-th OR power graph & $G^n_{{\bf X}_i}$; $c_{G^n_{{\bf X}_i}}({\bf X}_i)$\\ 
Characteristic graph entropy of $X_i$ & $H_{G_{X_i}}(X_i)$\\
Conditional characteristic graph entropy of $X_i$ such that $i\in\mathcal{S}$ given $X_{\mathcal{S}^c}$ & $H_{G_{X_i}}(X_i\,\vert\, X_{\mathcal{S}^c})$\\
\hline
\end{tabular}
\end{center}
\vspace{-0.3cm}
\caption{Notation.}
\label{table:tab1}
\end{table*}

\section{System model}
\label{model}

This section outlines our multi-server multi-function architecture and details our main technical contributions, namely the communication cost for the problem of distributed computing of general non-linear functions, and the cost for special instances of the computation problem under some simplifying assumptions on the dataset statistics, dataset correlations, placement, and the structures of functions.

In the multi-server, multi-function distributed computation framework, the master has access to the set of all datasets, and distributes the datasets across the servers. The total number of servers is $N$, and each server has a capacity of $M$. Communication from the master to the servers is allowed, whereas the servers are distributed and cannot collaborate. The user requests $K_c$ functions that could be non-linear. Given the dataset assignment to the servers, 
any subset of $N_r$ servers is sufficient to compute the functions 
requested.  
We denote by $\mathcal{T}(N, K, K_c, M, N_r)$ the topology for the described multi-server multi-function distributed computing setting, which we detail in the following.

\subsection{\bf Datasets, subfunctions, and placement into distributed servers} 
\label{sec:dataset_related_definitions}

There are $K$ datasets in total, each denoted by $D_k$, $k\in[K]$. 
Each distributed server $i\in \Omega=[N]$ with a capacity of $M$ is assigned a subset of datasets with indices $\mathcal{Z}_i\subseteq [K]$ such that $|\mathcal{Z}_i|=M$, where 
the assignments possibly overlap.

Each server computes a set of subfunctions $W_k=h_k(D_k)$ for $k\in \mathcal{Z}_i\subseteq [K]$, $i\in\Omega$. Datasets $\{D_k\}_{k\in[K]}$ could be dependent \footnote{We note that exploiting the temporal and spatial variation or dependence of data, it is possible to decrease the communication cost.
} across $\mathcal{K}$, so could $\{W_k\}_{k\in[K]}$. 
We denote the number of symbols in each $W_k$ by $L$, which equals the blocklength $n$. 
Let $X_i=\{W_k\}_{k\in \mathcal{Z}_i}=W_{\mathcal{Z}_i}=\{h_k(D_k)\}_{k\in\mathcal{Z}_i}$ denote the set of subfunctions of $i$-th server, $\mathcal{X}_i$ be the alphabet of $X_i$, and $X_{\Omega}=(X_1,X_2,\dots,X_N)$ be the set of subfunctions of all servers. 
We denote by ${\bf W}_k=W_{k1},W_{k2},\dots,W_{kn}$ and ${\bf X}_i=X_{i1},X_{i2},\dots, X_{in}\in \mathbb{F}_q^{|\mathcal{Z}_i|\times n}$, the length $n$ sequences of subfunction $W_k$,  
and of 
$W_{\mathcal{Z}_i}$ assigned to server $i\in\Omega$.

\subsection{\bf Cyclic dataset placement model, computation capacity, and recovery threshold}
\label{sec:cyclic_placement}

We assume that the total number of datasets $K$ is divisible by the number of servers $N$, i.e., $\frac{K}{N}\doteq\Delta\in\mathbb{Z}^+$. 
The dataset placement on $N$ distributed servers is conducted in a circular or cyclic manner, in the amount of $\Delta$ circular shifts between two consecutive servers, where the shifts are to the right and the final entries are moved to the first positions, if necessary. 
As a result of cyclic placement, any subset of $N_r$ servers covers the set of all datasets to compute the requested functions from the user. 
Given $N_r\in [N]$, each server has a storage size or computation cost of $|\mathcal{Z}_i|=M=\Delta(N-N_r+1)$, and the amount of dataset overlap between the consecutive servers is $\Delta(N-N_r)$. 

Hence, the set of indices assigned to server $i\in\Omega$ 
is given as follows:
\begin{align}
\label{cyclic_Zi}
\mathcal{Z}_i = \bigcup_{r=0}^{\Delta-1} \left\{ \mod{\{i,N\}}+rN, \mod{\{i+1,N\}}+rN, \dots,\mod{\{i+N-N_r,N\}}+rN \,\right\},
\end{align}

where $X_i=W_{\mathcal{Z}_i}$, $i\in\Omega$. 
As a result of (\ref{cyclic_Zi}), the cardinality of the datasets assigned to each server meets the storage capacity constraint $M$ with equality, i.e., 
$|\mathcal{Z}_i|=M$, for all $i\in\Omega$.

\subsection{\bf User demands and structure of the computation} 
\label{sec:user_demands}

We address the problem of distributed lossless compression of a set general multi-variable functions $F_j(X_{\Omega}):\mathcal{X}_1\times \mathcal{X}_2 \dots\times \mathcal{X}_N\to \mathbb{F}_q$, $j\in [K_c]$, requested by the user from the set of servers, where $K_c\geq 1$, and the functions are known by the servers and the user. More specifically, the user, from a subset of distributed servers, aims to compute in a lossless manner the following length $n$ sequence as $n$ tends to infinity:
\begin{align}
\label{eq-general-nonlinear-func}
F_j({\bf X}_{\Omega})=\{F_j(X_{1l},\,X_{2l},\dots, X_{Nl})\}_{l=1}^n \, \, j\in [K_c] \ ,
\end{align}
where $F_j(X_{1l},\,X_{2l},\dots, X_{Nl})$ is the function outcome for the $l$-th realization $l\in [n]$, given the length $n$ sequence. We note that the representation in (\ref{eq-general-nonlinear-func}) is the most general form of a (conceivably non-linear) multi-variate function, which encompasses the special cases of separable functions, and linearly separable functions, which we discuss next.

In this work, the user seeks to compute functions that are separable to each dataset. Each demanded function $f_j(\cdot)\in\mathbb{R}$, $j\in[K_c]$ is a function of subfunctions $\{W_k\}_{k\in\mathcal{K}}$ such that $W_k=h_k(D_k)\in\mathbb{F}_q$, where $h_k$ is a general function (could be linear or non-linear) of dataset $D_k$. Hence, using the relation $X_i=W_{\mathcal{Z}_i}=\{h_k(D_k)\}_{k\in\mathcal{Z}_i}$, each demanded function $j\in[K_c]$ can be written in the following form: 
\begin{align}
f_j(W_{\mathcal{K}})
=f_j(h_1(D_1),\dots,h_K(D_K))
=F_j(\{h_k(D_k)\}_{k\in\mathcal{Z}_1},\dots,\{h_k(D_k)\}_{k\in\mathcal{Z}_N})
=F_j(X_{\Omega})%
\ .
\end{align}

In the special case of linearly separable functions\footnote{Special instances of the linearly separable representation of subfunctions $\{W_k\}_k$ given in (\ref{linearly_separable_functions}) are linear functions of the datasets $\{D_k\}$ and are denoted by $F_j=\sum_{k}\gamma_{jk} D_k$.} \cite{wan2021distributed}, the demanded functions take the form: 
\begin{align}
\label{linearly_separable_functions}
\{F_j(X_{\Omega})\}_{j\in [K_c]}
=
\begin{bmatrix}
F_1 & F_2 & \hdots & F_{K_c} 
\end{bmatrix}^{\intercal}
={\bf \Gamma}{\bf W}\ ,
\end{align}
where ${\bf W}=\begin{bmatrix}
W_1 & W_2 & \hdots & W_K 
\end{bmatrix}^{\intercal}\in\mathbb{F}_q^{K\times 1}$ is the subfunction vector, and the coefficient matrix ${\bf \Gamma}=\{\gamma_{jk}\}\in\mathbb{F}_q^{K_c\times K}$ is known to the master node, servers, and the user. In other words, $\{F_j(X_{\Omega})\}_{j\in[K_c]}$ is a set of linear maps from the subfunctions $\{W_k\}_k$, where $F_j(X_{\Omega})=\sum\nolimits_{k\in[K]} \gamma_{jk}\cdot W_k$. 
We do not restrict $\{F_j(X_{\Omega})\}_{j\in [K_c]}$ to linearly separable functions, i.e., it may hold that $\{F_j(X_{\Omega})\}_{j\in [K_c]}\neq {\bf \Gamma}{\bf W}$.

\subsection{\bf Communication cost for the characteristic graph-based computing approach}
\label{sec:communication_cost_char_graph}

To compute $\{F_j({\bf X}_{\Omega})\}_{j\in[K_c]}$, each server $i\in\Omega$ constructs a characteristic graph, denoted by $G_{X_i}$, for compressing $X_i$. More specifically, for asymptotic lossless computation of the demanded functions, the server builds the $n$-th OR power $G^n_{{\bf X}_i}$ of $G_{X_i}$ for compressing ${\bf X}_i$ 
to determine the transmitted information. 
The minimal possible code rate achievable to distinguish the edges of $G^n_{{\bf X}_i}$ as $n\to\infty$, is given the {\emph{Characteristic graph entropy}}, $H_{G_{X_i}}(X_i)$. 
For a primer on key graph-theoretic concepts, characteristic graph-related definitions, and the fundamental compression limits of characteristic graphs, we refer the reader to \cite{malak2022fractional}, \cite{salehi2023achievable}, \cite{FeiMed2014}.  
In this work, we solely focus on the characterization of the total communication cost from all servers to the user, i.e., the achievable sum-rate, without accounting for the costs of communication between the master and the servers, and of computations performed at the servers and the user.

Each $i\in\Omega$ builds a mapping from ${\bf X}_i$ to a valid coloring of $G^n_{{\bf X}_i}$, denoted by $c_{G^n_{{\bf X}_i}}({\bf X}_i)$. 
The coloring $c_{G^n_{{\bf X}_i}}({\bf X}_i)$ specifies the color classes of ${\bf X}_i$ that form independent sets to distinguish the demanded function outcomes. 
Given an encoding function $g_i$ that models the transmission of server $i\in\Omega$ for computing $\{F_j({\bf X}_{\Omega})\}_{j\in[K_c]}$, we denote by ${\bf Z}_i=g_i({\bf X}_i)=e_{X_i}(c_{G^n_{{\bf X}_i}}({\bf X}_i))$ the color encoding performed by server $i\in\Omega$ for ${\bf X}_i$. Hence, the communication rate of server $i\in\Omega$, for a sufficiently large blocklength $n$, where $T_i$ is the length %
for the color encoding performed at $i\in\Omega$, is 
\begin{align}
\label{communication_cost_functional_compression}
\!\! R_i=\frac{T_i}{L}=\frac{H(e_{X_i}(c_{G^n_{{\bf X}_i}}({\bf X}_i)))}{n}\geq H_{G_{X_i}}(X_i) \ ,\,\, i\in\Omega\ ,
\end{align}
where the inequality follows from exploiting the achievability of 
$H_{G_{X_i}}(X_i)=\lim\limits_{n\to\infty} \frac{1}{n}H^{\chi}_{G^n_{{\bf X}_i}}({\bf X}_i)$, where $H^{\chi}_{G^n_{{\bf X}_i}}({\bf X}_i)$ is the {\emph{chromatic entropy}} of the graph $G^n_{{\bf X}_i}$ \cite{AlonOrlit1996}, \cite{Korner1973}. We refer the reader to Appendix~\ref{subsec:Preliminary-graph} for a detailed description of the notions of chromatic and graph entropies (cf.~ (\ref{chromatic_entropy_expression}) and (\ref{Korners_graph_entropy}), respectively).

For the multi-server multi-function distributed setup, using the characteristic graph-based fundamental limit in (\ref{communication_cost_functional_compression}), an achievable sum-rate for asymptotic lossless computation is 
\begin{align}
R_{\rm ach}=\sum\limits_{i\in \Omega}R_i\leq \sum\limits_{i\in \Omega}H_{G_{X_{i}}}( X_{i}) \ . 
\end{align}

We next provide our main results in Section~\ref{results}.

\section{Main results}
\label{results}

In this section, we analyze the multi-server multi-function distributed computing framework exploiting the characteristic graph-based approach in \cite{Korner1973}. In contrast to the previous research attempts in this direction, our solution method is general, and it captures
(i) general input statistics or dataset distributions or the skew in data instead of assuming uniform distributions, 
(ii) correlations across datasets, (iii) any dataset placement model across servers, beyond the cyclic \cite{wan2021distributed} or the Maddah-Ali and Niesen \cite{MaddahAli2013Journal} placements, 
and
(iv) general function classes requested by the user, instead of focusing on a particular function type (see e.g., \cite{wan2021distributed,khalesi2022multi, 4544980}).

Subsequently, we will delve into specific function computation scenarios. First, we will present our main result (Theorem \ref{theo_cyclic_placement_general_source-general_function_graph-rate_UB}) which is the most general form that captures (i)-(iv). 
We then demonstrate (in Proposition~\ref{prop_cyclic_placement_iid_uniform_source-linear_function_graph-rate_UB}) that the celebrated result of Wan {\em et al.} \cite[Theorem 2]{wan2021distributed} can be obtained as a special case of Theorem \ref{theo_cyclic_placement_general_source-general_function_graph-rate_UB}, given that: 
(i) the datasets are i.i.d. and uniform over $q$-ary fields, (ii) the placement of datasets across servers is cyclic, and (iii) the demanded functions are linearly separable, given as in (\ref{linearly_separable_functions}). 
Under correlated and identically distributed Bernoulli dataset model with a skewness parameter~$\epsilon\in (0,1)$ for datasets, we next present in Proposition~\ref{prop:general_placement__correlated_Boolean_function}, the achievable sum rate for computing Boolean functions. 
Finally, in Proposition~\ref{prop:multi_shot_multilinear_function}, we analyze our characteristic graph-based approach for evaluating multi-linear functions, a pertinent class of non-linear functions, under the assumption of cyclic placement and i.i.d. Bernoulli distributed datasets with parameter~$\epsilon$, and derive an upper bound on the sum rate needed. To gain insight into our analytical results and demonstrate the savings in the total communication cost, we provide some numerical examples.  

We next present our main theorem (Theorem~\ref{theo_cyclic_placement_general_source-general_function_graph-rate_UB}) on the achievable communication cost for the multi-server, multi-function topology, which holds for all input statistics, under any correlation model across datasets, and for distributed computing of all function classes requested by the user, regardless of the data assignment over the servers' caches. The key to capturing the structure of general functions in Theorem~\ref{theo_cyclic_placement_general_source-general_function_graph-rate_UB} is the utilization of a characteristic graph-based compression technique, as proposed by K\"orner in \cite{Korner1973}. 
\footnote{For a more detailed description of characteristic graphs and their entropies, see Appendix~\ref{subsec:Preliminary-graph}.}

\begin{theo}
\label{theo_cyclic_placement_general_source-general_function_graph-rate_UB}    
{\bf (Achievable sum-rate using the characteristic graph approach for general functions and distributions.)} 
In the multi-server, multi-function distributed computation model, denoted by $\mathcal{T}(N, K, K_c, M, N_r)$, under general placement of datasets, and for a set of $K_c$ general functions $\{f_j(W_{\mathcal{K}})\}_{j\in[K_c]}$ requested by the user, and under general jointly distributed dataset models, including non-uniform inputs and allowing correlations across datasets, the characteristic graph-based compression yields the following upper bound on the achievable communication rate:
\begin{align}
\label{eq:general_communication_rate_upper_bound}
R_{\rm ach} \leq \sum\limits_{i=1}^{N_r}\, \min\limits_{Z_i=g_i(X_i)\, : \, g_i\in \mathcal{C}_i} H_{G^{\cup}_{X_i}}(X_i) \ ,
\end{align}
where 
\begin{itemize}
\item $G^{\cup}_{X_i}=\bigcup\limits_{j\in[K_c]} G_{X_i,j}$ is the union characteristic graph\footnote{We refer the reader to (\ref{union_graph}) (Appendix~\ref{subsec:Preliminary-graph}) for the definition of a union of characteristic graphs.} that server $i\in\Omega$ builds for computing $\{f_j(W_{\mathcal{K}})\}_{j\in[K_c]}$,
\item $\mathcal{C}_i\ni g_i$ denotes a codebook of functions that server $i\in\Omega$ uses for computing $\{f_j(W_{\mathcal{K}})\}_{j\in[K_c]}$,  
\item each subfunction $W_k$, $k\in\mathcal{K}$ is defined over a $q$-ary field such that the characteristic is at least $2$, 
and
\item $Z_i=g_i(X_i)$ such that $g_i\in\mathcal{C}_i$ denotes the transmitted information from server $i\in\Omega$.
\end{itemize}
\end{theo}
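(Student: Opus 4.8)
The plan is to combine three ingredients already available in the excerpt: the per-server characteristic-graph achievability recorded in (\ref{communication_cost_functional_compression}), the recovery-threshold property of the placement, and a union-graph construction that collapses the $K_c$ simultaneous function demands into a single coloring problem per server. First I would invoke the recovery threshold. Since any subset of $N_r$ servers covers all datasets $\{D_k\}_{k\in\mathcal{K}}$, and every demanded function $f_j(W_{\mathcal{K}})$ is a deterministic function of these datasets, it suffices for the user to receive transmissions from any fixed set of $N_r$ servers --- say $i\in[N_r]$ --- while the remaining $N-N_r$ servers stay silent. This immediately restricts the sum in (\ref{eq:general_communication_rate_upper_bound}) from $\Omega$ to $[N_r]$.

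Second, for each active server $i\in[N_r]$ I would reduce the $K_c$ demands to a single compression task. For each $j$, the characteristic graph $G_{X_i,j}$ connects two realizations of $X_i$ whenever they must be distinguished to evaluate $F_j$ given some positive-probability assignment of the other servers' subfunctions. Taking the union $G^{\cup}_{X_i}=\bigcup_{j\in[K_c]}G_{X_i,j}$ as in (\ref{union_graph}), a single valid coloring of $G^{\cup}_{X_i}$ is simultaneously a valid coloring of every $G_{X_i,j}$, so one transmission per server serves all $K_c$ functions. Applying the achievability in (\ref{communication_cost_functional_compression}) to the $n$-th OR power of $G^{\cup}_{X_i}$ then shows that server $i$ can transmit at rate $H_{G^{\cup}_{X_i}}(X_i)$ in the limit $n\to\infty$; the minimization over $g_i\in\mathcal{C}_i$ simply records the choice of the best encoding function within the server's codebook that attains this graph entropy. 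Summing over $i\in[N_r]$ then yields the right-hand side of (\ref{eq:general_communication_rate_upper_bound}).

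The final and most delicate step is decoder correctness: I must argue that the $N_r$ received colorings jointly determine all $K_c$ function sequences losslessly as $n\to\infty$. The key is that two global realizations ${\bf X}_{\Omega}$ and ${\bf X}'_{\Omega}$ that receive identical colors at every active server lie in the same independent set of each $G^{\cup}_{X_i}$, and --- because the $N_r$ servers cover $\mathcal{K}$ and each $G^{\cup}_{X_i}$ is built to separate all demanded outcomes --- they must produce identical values of $F_j$ for every $j\in[K_c]$. This is precisely the joint-decodability guarantee underlying the chromatic-entropy and characteristic-graph machinery of K\"orner and Alon-Orlitsky invoked in (\ref{communication_cost_functional_compression}). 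I expect establishing it --- namely that separate per-server colorings suffice for joint \emph{function} recovery rather than merely joint data recovery --- to be the main obstacle, since it is exactly what distinguishes the distributed multi-server setting from the single-source case and is where the structural interplay between the union graphs and the overlapping cyclic placement has to be used carefully.
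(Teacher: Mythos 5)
Your proposal follows essentially the same route as the paper's own proof: build the union characteristic graph $G^{\cup}_{X_i}$ per server, invoke the per-server achievability of the graph entropy via the coloring/codebook map $Z_i=g_i(X_i)=e_{X_i}(c_{G^n_{{\bf X}_i}}({\bf X}_i))$, minimize over $g_i\in\mathcal{C}_i$, and sum over any $N_r$ servers guaranteed by the recovery threshold. The only difference is that you explicitly flag the joint-decodability of separate per-server colorings as the delicate step, whereas the paper's proof leaves that guarantee implicit in the K\"orner/Alon--Orlitsky machinery it cites.
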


\begin{proof}
 See Appendix~\ref{Proof_theo_cyclic_placement_general_source-general_function_graph-rate_UB}.   
\end{proof}

Theorem~\ref{theo_cyclic_placement_general_source-general_function_graph-rate_UB} provides a general upper bound on the sum-rate for computing functions for general dataset statistics and correlations, and the placement model, and allows any function type, over a field of characteristic $q\geq 2$. We note that in (\ref{eq:general_communication_rate_upper_bound}), the codebook $\mathcal{C}_i$ determines the structure of the union characteristic graph $G^{\cup}_{X_i}$, which, in turn, determines the distribution of $Z_i$. Therefore, the tightness of the rate upper bound relies essentially on the codebook selection. 
We also note that it is possible to analyze the computational complexity of building a characteristic graph and computing the bound in (\ref{eq:general_communication_rate_upper_bound}) via evaluating the complexity of the transmissions $Z_i$ determined by $\{f_j(W_{\mathcal{K}})\}_{j\in[K_c]}$ for a given $i\in\Omega$. However, the current manuscript focuses primarily on the cost of communication, and we leave the computational complexity analysis to future work.
Because (\ref{eq:general_communication_rate_upper_bound}) is not analytically tractable, in the following, we will focus on special instances of Theorem~\ref{theo_cyclic_placement_general_source-general_function_graph-rate_UB}, to gain insights into the effects of input statistics, dataset correlations, and special function classes in determining the total communication cost. 

We next demonstrate that the achievable communication cost for the special scenario of {\em distributed linearly separable computation} framework given in \cite[Theorem 2]{wan2021distributed} is embedded by the characterization provided in Theorem~\ref{theo_cyclic_placement_general_source-general_function_graph-rate_UB}. We next showcase the achievable sum rate result for linearly separable functions. 

\begin{prop}
\label{prop_cyclic_placement_iid_uniform_source-linear_function_graph-rate_UB} 
{\bf (Achievable sum-rate using the characteristic graph approach for linearly separable functions and i.i.d.  subfunctions over $\mathbb{F}_q$.)} 
In the multi-server, multi-function distributed computation model, denoted by $\mathcal{T}(N, K, K_c, M, N_r)$, under the cyclic placement of datasets, where $\frac{K}{N}=\Delta\in\mathbb{Z}^+$, and for a set of $K_c$ linearly separable functions, given as in (\ref{linearly_separable_functions}), requested by the user, and given i.i.d. 
uniformly distributed subfunctions over a field of characteristic $q\geq 2$,  
the characteristic graph-based compression yields the following bound on the achievable communication rate:
\begin{align}
\label{eq:linearly_separable_communication_rate_upper_bound}
R_{\rm ach} \leq \begin{cases}
\min\{K_c,\Delta\}N_r\ ,\quad 1\leq K_c\leq \Delta Nr \ ,\\
\min\{K_c, K\} \ ,\quad \Delta Nr<K_c \ .
\end{cases}
\end{align}
\end{prop}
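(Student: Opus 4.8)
The plan is to derive the bound by specializing the general achievable sum-rate of Theorem~\ref{theo_cyclic_placement_general_source-general_function_graph-rate_UB} to i.i.d.\ uniform subfunctions and linearly separable demands, and then to exhibit an explicit coloring/codebook whose induced rate matches the two branches in (\ref{eq:linearly_separable_communication_rate_upper_bound}).

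First I would record the following specialization. Since each $F_j=\sum_{k}\gamma_{jk}W_k$ is $\mathbb{F}_q$-linear and the subfunctions are i.i.d.\ and uniform over $\mathbb{F}_q$, any encoding $g_i$ that server $i$ may place in its codebook $\mathcal{C}_i$ can, without loss of optimality, be taken to be an $\mathbb{F}_q$-linear map of the locally stored block $X_i=W_{\mathcal{Z}_i}$. For such linear encodings, a valid coloring of $G^{\cup}_{X_i}$ corresponds exactly to a set of $\mathbb{F}_q$-linear combinations of $\{W_k\}_{k\in\mathcal{Z}_i}$ sufficient to separate the demanded outputs contributed by server $i$, and---crucially---because the inputs are uniform, the graph entropy $H_{G^{\cup}_{X_i}}(X_i)$ of that coloring equals the number of $\mathbb{F}_q$-linearly independent combinations transmitted, i.e.\ the rank of the corresponding coefficient matrix (in $q$-ary units). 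Thus $\min_{g_i\in\mathcal{C}_i}H_{G^{\cup}_{X_i}}(X_i)$ is the least number of independent local combinations server $i$ must send so that, jointly with the other active servers, the user recovers $\mathbf{F}=\Gamma\mathbf{W}$. This reduces the problem to a linear-algebraic covering question and lets me bound each summand of (\ref{eq:general_communication_rate_upper_bound}) by counting independent transmitted symbols.

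Next I would exploit the cyclic placement of Section~\ref{sec:cyclic_placement}. Grouping the $K=\Delta N$ datasets into $N$ consecutive blocks of size $\Delta$, the assignment (\ref{cyclic_Zi}) makes each server hold $N-N_r+1$ consecutive blocks, and the recovery property guarantees that any $N_r$ (say the first $N_r$, by symmetry) servers tile all $N$ blocks, so that $\sum_{i=1}^{N_r}U_i=\mathbb{F}_q^{K}$, where $U_i$ is the space of vectors supported on $\mathcal{Z}_i$. I would then construct the scheme by regime. For $1\le K_c\le \Delta N_r$, I let each active server transmit $\min\{K_c,\Delta\}$ suitably coded combinations of its local subfunctions; using generic (MDS-type) coefficients together with the tiling property, the $\min\{K_c,\Delta\}N_r\ (\ge K_c\ge \mathrm{rank}(\Gamma))$ transmitted symbols jointly span $\mathrm{rowspace}(\Gamma)$, so the user decodes all $K_c$ functions, while the coloring of each $G^{\cup}_{X_i}$ realizing this has entropy $\min\{K_c,\Delta\}$; summing over the $N_r$ active servers yields the first branch, with the two sub-cases $K_c\le\Delta$ and $\Delta<K_c\le\Delta N_r$ both captured by $\min\{K_c,\Delta\}$. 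For $K_c>\Delta N_r$, I use that at most $\min\{K_c,K\}$ independent outputs exist (there are only $K$ datasets and $\mathrm{rank}(\Gamma)\le\min\{K_c,K\}$); transmitting a spanning set of $\min\{K_c,K\}$ independent combinations in total---by partitioning the datasets across the covering servers and, when $K_c<K$, coding down to $\mathrm{rank}(\Gamma)$---gives the second branch. Finally I would check consistency at the boundary $K_c=\Delta N_r$, where both branches return $\Delta N_r$.

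The hardest part will be the coded achievability inside the first regime: showing that $\min\{K_c,\Delta\}$ combinations per server are simultaneously (i) supportable on the cyclically assigned storage, (ii) decodable for every admissible set of $N_r$ responding servers, and (iii) genuinely realizable as valid colorings of $G^{\cup}_{X_i}$ whose graph entropy equals the transmitted rank. This is exactly the step that re-derives the achievability of \cite[Theorem 2]{wan2021distributed} within the characteristic-graph language, and I expect to either embed their MDS-based construction as the codebook family $\{\mathcal{C}_i\}$ or to argue existence of the required coefficients by a dimension/genericity count over $\mathbb{F}_q$. The remaining steps---the rank-equals-entropy identity under uniform inputs and the block-tiling bookkeeping---are routine once this construction is in place.
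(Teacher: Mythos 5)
Your proposal follows essentially the same route as the paper's proof: specialize Theorem~\ref{theo_cyclic_placement_general_source-general_function_graph-rate_UB}, restrict the codebooks $\mathcal{C}_i$ to $\mathbb{F}_q$-linear maps so that under i.i.d.\ uniform inputs the union-graph entropy $H_{G^{\cup}_{X_i}}(X_i)$ collapses to $H(Z_i)$, i.e.\ to the number of independent transmitted combinations, and then count per-server transmissions regime by regime using the cyclic tiling ($M$ subfunctions from the first server, $\Delta$ new ones from each subsequent server). The one point of divergence is that you lean on generic/MDS coefficient existence for achievability in the first regime --- which is delicate over small fields such as $\mathbb{F}_2$ --- whereas the paper simply asserts that transmitting $\min\{K_c,\Delta\}$ linearly independent local combinations per server suffices and invokes the i.i.d.\ uniformity to claim validity for all $q\geq 2$; neither treatment is fully rigorous on that step, so your plan is no weaker than the paper's.
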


\begin{proof}
See Appendix~\ref{Proof_prop_cyclic_placement_iid_uniform_source-linear_function_graph-rate_UB}.
\end{proof}

We note that Theorem~\ref{theo_cyclic_placement_general_source-general_function_graph-rate_UB} results in Proposition~\ref{prop_cyclic_placement_iid_uniform_source-linear_function_graph-rate_UB} when three conditions hold: (i) the dataset placement across servers is cyclic following the rule in (\ref{cyclic_Zi}), (ii) the subfunctions $W_{\mathcal{K}}$ are i.i.d. and uniform over $\mathbb{F}_q$ (see (\ref{M_iid_W_k_variables}) in  Appendix~\ref{Proof_prop_cyclic_placement_iid_uniform_source-linear_function_graph-rate_UB}), and (iii) the codebook $\mathcal{C}_i$ is restricted to linear combinations of subfunctions $W_{\mathcal{K}}$, which yields that the independent sets of $G^{\cup}_{X_i}$ satisfy a set of linear constraints\footnote{We detail these linear constraints in Appendix~\ref{Proof_prop_cyclic_placement_iid_uniform_source-linear_function_graph-rate_UB}, where the set of linear equations given in (\ref{linear_encoding}) is used to simplify the entropy $H_{G^{\cup}_{X_i}}(X_i)$ of the union characteristic graph $G^{\cup}_{X_i}$ via the expression given in (\ref{mutual_info_definition_indep_set_for_Kc_functions}) for evaluating the upper bound given in (\ref{eq:general_communication_rate_upper_bound_app}) on the achievable sum rate for computing the desired functions via exploiting the entropies of the union characteristic graphs for each of the $N_r$ servers, given the recovery threshold $N_r$.} in the variables $\{W_k\}_{k\in\mathcal{Z}_i}$. 
Note that the linear encoding and decoding approach for computing linearly separable functions, proposed by Wan {\em et al.} in   \cite[Theorem~2]{wan2021distributed}, is valid over a field of characteristic $q>3$. However, in Proposition~\ref{prop_cyclic_placement_iid_uniform_source-linear_function_graph-rate_UB}, the characteristic of $\mathbb{F}_q$ is at least $2$, i.e., $q\geq 2$,  generalizing \cite[Theorem~2]{wan2021distributed} to larger input alphabets.

Next, we aim to demonstrate the merits of the characteristic graph-based compression in {\em capturing dataset correlations} within the multi-server, multi-function distributed computation framework. 
More specifically, we restrict the general input statistics in Theorem~\ref{theo_cyclic_placement_general_source-general_function_graph-rate_UB} such that the datasets are correlated and identically distributed, where each subfunction follows a Bernoulli distribution with the same parameter $\epsilon$, i.e., $W_k\sim {\rm Bern}(\epsilon)$, with $\epsilon\in (0,1)$, and the user demands $K_c$ arbitrary  Boolean functions regardless of the data assignment. Similarly to Theorem~\ref{theo_cyclic_placement_general_source-general_function_graph-rate_UB}, the following proposition (Proposition~\ref{prop:general_placement__correlated_Boolean_function}) holds for general function types (Boolean) regardless of the data assignment.

\begin{prop}
\label{prop:general_placement__correlated_Boolean_function}    
{\bf (Achievable sum-rate using the characteristic graph approach for general functions and identically distributed subfunctions over $\mathbb{F}_2$.)} 
In the multi-server multi-function distributed computing setting, denoted by $\mathcal{T}(N, K, K_c, M, N_r)$,
under the general placement of datasets, and for a set of $K_c$ Boolean functions $\{f_j(W_{\mathcal{K}})\}_{j\in[K_c]}$ requested by the user, and given identically distributed and correlated 
subfunctions with $W_k\sim {\rm Bern}(\epsilon)$, $k\in[K]$, where $\epsilon\in (0,1)$, the characteristic graph-based compression yields the following bound on the achievable communication rate: 
\begin{align}
\label{eq:bernouilli_correlated_subfunctions_communication_rate_upper_bound}
R_{\rm ach} \leq \sum\limits_{i=1}^{N_r}\, \min\limits_{Z_i=g_i(X_i)\, : \, g_i\in \mathcal{C}_i} h(Z_i) \ ,
\end{align}
where
\begin{itemize} 
\item $\mathcal{C}_i \ni g_i\,:\, \{0,1\}^{M}\to \{0,1\}$ denotes a codebook of Boolean functions that server $i\in\Omega$ uses,   
\item $Z_i=g_i(X_i)$ such that $g_i\in\mathcal{C}_i$ denotes the transmitted information from server $i\in\Omega$,
\item $G^{\cup}_{X_i}$ has two maximal independent sets (MISs), namely $s_0(G^{\cup}_{X_i})$ and $s_1(G^{\cup}_{X_i})$, yielding
$Z_i=0$ and $Z_i=1$, respectively, 
and 
\item the probability that $W_{\mathcal{Z}_i}$ yields the function value $Z_i=1$ is given as
\begin{align}
\label{prob_indep_set_1}
\mathbb{P}(Z_i=1)=\mathbb{P}(W_{\mathcal{Z}_i}\in s_1(G^{\cup}_{X_i}))\ , \quad i\in \Omega \ .
\end{align}
\end{itemize}

\end{prop}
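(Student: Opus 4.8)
The plan is to obtain Proposition~\ref{prop:general_placement__correlated_Boolean_function} as a specialization of Theorem~\ref{theo_cyclic_placement_general_source-general_function_graph-rate_UB} to Boolean-valued functions together with a codebook of Boolean encodings. Starting from the general bound $R_{\rm ach} \leq \sum_{i=1}^{N_r} \min_{g_i \in \mathcal{C}_i} H_{G^{\cup}_{X_i}}(X_i)$, I restrict each codebook $\mathcal{C}_i$ to Boolean functions $g_i : \{0,1\}^M \to \{0,1\}$, and show that for every such encoding the characteristic graph entropy collapses to the binary entropy $h(Z_i)$ of the single transmitted bit $Z_i = g_i(X_i)$.

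The central step is the identity $H_{G^{\cup}_{X_i}}(X_i) = h(Z_i)$. A Boolean encoding $g_i$ partitions the vertex set $\{0,1\}^M$ of $G^{\cup}_{X_i}$ into the two disjoint preimages $g_i^{-1}(0)$ and $g_i^{-1}(1)$; for $g_i$ to be a valid (decodable) coloring each preimage must be an independent set, which we identify with the two MISs $s_0(G^{\cup}_{X_i})$ and $s_1(G^{\cup}_{X_i})$ of the statement. I then invoke K\"orner's characterization of graph entropy (cf.~(\ref{Korners_graph_entropy})) as the minimum of $I(X_i; W)$ taken over independent sets $W$ containing $X_i$. Since $s_0$ and $s_1$ are disjoint and cover all vertices, the independent set realized by $X_i$ is a deterministic function of $Z_i$, giving $H(W \mid X_i)=0$; and since every independent set lies inside one of the two MISs, coarsening $W$ to the partition $\{s_0, s_1\}$ is optimal. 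Hence $I(X_i; W) = H(W) = H(Z_i) = h(\mathbb{P}(Z_i=1))$, which is achievable at server $i$ via the coloring bound~(\ref{communication_cost_functional_compression}).

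To finish, I rewrite the transmit probability as the set-membership event $\mathbb{P}(Z_i = 1) = \mathbb{P}(g_i(X_i)=1) = \mathbb{P}(W_{\mathcal{Z}_i} \in s_1(G^{\cup}_{X_i}))$, using $X_i = W_{\mathcal{Z}_i}$; under the correlated, identically distributed law $W_k \sim {\rm Bern}(\epsilon)$ this probability is obtained from the joint distribution of the subfunctions stored at server $i$. Substituting $H_{G^{\cup}_{X_i}}(X_i)=h(Z_i)$ into the bound of Theorem~\ref{theo_cyclic_placement_general_source-general_function_graph-rate_UB}, minimizing over $g_i \in \mathcal{C}_i$, and summing over the $N_r$ servers yields exactly~(\ref{eq:bernouilli_correlated_subfunctions_communication_rate_upper_bound}).

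The main obstacle is justifying the two-MIS structure itself, namely that the vertex set of $G^{\cup}_{X_i}$ admits a cover by exactly two maximal independent sets under the Boolean codebook, so that the K\"orner minimization neither splits into finer classes nor coarsens across the partition. This is what pins the graph entropy to $h(Z_i)$ rather than to a strictly smaller value: were either class non-maximal, a larger admissible independent set would lower $I(X_i; W)$ below $h(Z_i)$. I expect to settle this by showing that the Boolean-valued nature of the demanded functions constrains the confusability relation defining $G^{\cup}_{X_i}$ so that only the two color classes survive as maximal independent sets; the Bernoulli assumption then enters only in the final closed-form evaluation of $\mathbb{P}(Z_i=1)$.
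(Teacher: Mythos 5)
Your proposal follows essentially the same route as the paper's proof: both specialize Theorem~\ref{theo_cyclic_placement_general_source-general_function_graph-rate_UB} by restricting the codebook to single-bit Boolean encodings, argue that $G^{\cup}_{X_i}$ then has (at most) two maximal independent sets $s_0(G^{\cup}_{X_i})$ and $s_1(G^{\cup}_{X_i})$ partitioning the vertex set so that the K\"orner minimization collapses to $I(X_i;U_i)=H(U_i)=h\big(\mathbb{P}(W_{\mathcal{Z}_i}\in s_1(G^{\cup}_{X_i}))\big)=h(Z_i)$, and then sum over the $N_r$ servers. The two-MIS structure you flag as the remaining obstacle is likewise simply asserted in the paper (as a consequence of the demand being Boolean, with $s_1$ characterized as the set of $w_{\mathcal{Z}_i}$ giving function value $1$ for all completions), so your treatment is at least as explicit as the paper's.
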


\begin{proof}
See Appendix \ref{proof-prop:general_placement__correlated_Boolean_function}.
\end{proof}

While admittedly the above approach (Proposition~\ref{prop:general_placement__correlated_Boolean_function}) may not directly offer sufficient insight, it does employ the new machinery to offer a generality that allows us to plug in any set of parameters to determine the achievable performance.

Contrasting Propositions~\ref{prop_cyclic_placement_iid_uniform_source-linear_function_graph-rate_UB}-~\ref{prop:general_placement__correlated_Boolean_function}, which give the total communication costs for computing linearly separable and Boolean functions, respectively, over $\mathbb{F}_2$, Proposition~\ref{prop:general_placement__correlated_Boolean_function}, 
by exploiting the skew and correlations of datasets indexed by $\mathcal{Z}_i$, as well as the functions' structures via the MISs $s_0(G^{\cup}_{X_i})$ and $s_1(G^{\cup}_{X_i})$ of server $i\in\Omega$, demonstrates that harnessing the correlation across the datasets can indeed reduce the total communication cost versus the setting in Proposition~\ref{prop_cyclic_placement_iid_uniform_source-linear_function_graph-rate_UB}, devised with the assumption of i.i.d. and uniformly distributed subfunctions.

The prior works have focused on devising distributed computation frameworks and exploring their communication costs for specific function classes. For instance, in \cite{ korner1979encode},  K{\"o}rner and Marton have restricted the computation to be the binary sum function, and in \cite{han1987dichotomy}, Han and Kobayashi have classified functions into two categories depending on whether they can be computed at a sum rate that is lower than that of \cite{SlepWolf1973}. 
Furthermore, the computation problem has been studied for specific topologies, e.g., the side information setting in \cite{AlonOrlit1996,OrlRoc2001}. 
Despite the existing efforts, see e.g., \cite{korner1979encode,han1987dichotomy,AlonOrlit1996,OrlRoc2001}, to the best of our knowledge, for the given multi-server, multi-function distributed computing scenario, there is still no general framework for determining the fundamental limits of the total communication cost for computing general non-linear functions. Indeed, for this setting, the most pertinent existing work that applies to general non-linear functions and provides an upper bound on the achievable sum rate 
is that of Slepian-Wolf \cite{SlepWolf1973}. On the other hand, the upper bound on the achievable computation scheme presented in Theorem~\ref{theo_cyclic_placement_general_source-general_function_graph-rate_UB} can provide savings in the communication cost over \cite{SlepWolf1973} for functions including linearly separable functions and beyond. 
To that end, we exploit Theorem~\ref{theo_cyclic_placement_general_source-general_function_graph-rate_UB} to determine an upper bound on the achievable sum-rate for distributed computing of a multi-linear function in the form of
\begin{align}
\label{multilinear_function}  f(W_{\mathcal{K}})=\prod\limits_{k\in [K]} W_k \ .
\end{align}

Note that (\ref{multilinear_function}) is used in various scenarios, including distributed machine learning, e.g., to reduce variance in noisy datasets via ensemble learning  \cite{kaur2020comparison} or weighted averaging \cite{chen2021federated}, 
sensor network applications to aggregate readings for improved data analysis \cite{zhao2003computing}, as well as distributed optimization and financial modeling, where these functions play pivotal roles in establishing global objectives and managing risk and return \cite{giselsson2018large, kavadias2007resource}.

Drawing on the utility of characteristic graphs in capturing the structures of data and functions, as well as input statistics and correlations, and the general result in Theorem~\ref{theo_cyclic_placement_general_source-general_function_graph-rate_UB}, our next result,    Proposition~\ref{prop:multi_shot_multilinear_function}, demonstrates a new upper bound on the achievable sum rate for computing {\emph{multi-linear functions}} within the framework of multi-server and multi-function distributed computing via exploiting conditional graph entropies.

\begin{prop}
\label{prop:multi_shot_multilinear_function}
{\bf (Achievable sum-rate using the characteristic graph approach for multi-linear functions and i.i.d. subfunctions over $\mathbb{F}_2$.)} 
In multi-server multi-function distributed computing setting, denoted by $\mathcal{T}(N, K, K_c, M, N_r)$, under the cyclic placement of datasets, where $\frac{K}{N}=\Delta\in\mathbb{Z}^+$, and for computing the multi-linear function ($K_c=1$), given as in (\ref{multilinear_function}), requested by the user, and given i.i.d. uniformly distributed subfunctions $W_k\sim{\rm Bern}(\epsilon)$, $k\in[K]$, for some $\epsilon\in (0,1)$, the characteristic graph-based compression yields the following bound on the achievable communication rate:
\begin{align}
\label{sum_rate_multishot_multilinear}
R_{\rm ach}\leq \frac{1-(\epsilon_M)^{N^*}}{1-\epsilon_M}\cdot h(\epsilon_M)+(\epsilon_M)^{N^*}\cdot\, 1_{\Delta_N>0}\cdot\, h\big(\epsilon_{\Deltaprod}\big) \ ,
\end{align}
where 
\begin{itemize}
\item $\epsilon_M=\epsilon^M$ denotes the probability that the product of $M$ subfunctions, with $W_k\sim{\rm Bern}(\epsilon)$ being i.i.d. across $k\in [K]$, taking the value one, i.e., $\mathbb{P}\Big(\prod\nolimits_{k\in \mathcal{S}\,:\, |\mathcal{S}|=M} W_k\Big)=\epsilon_M$, 
\item the variable $N^*=\left\lfloor \frac{N}{N-N_r+1}\right\rfloor$ denotes the minimum number of servers needed to compute $f(W_{\mathcal{K}})$, given as in (\ref{multilinear_function}), where each of these servers computes a disjoint product of $M$ subfunctions, and 
\item the variable $\Delta_N=N-N^*\cdot (N-N_r+1)$ represents whether an additional server is needed to aid the computation, and if $\Delta_N\geq 1$, then $\Deltaprod$ denotes the number of subfunctions to be computed by the additional server, and similarly as above, $\mathbb{P}\Big(\prod\nolimits_{k\in \mathcal{S}\,:\, |\mathcal{S}|={\Deltaprod}} W_k\Big)=\epsilon_{\Deltaprod}$.
\end{itemize}
\end{prop}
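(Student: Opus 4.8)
\emph{Proof proposal.} The plan is to exploit the fact that, over $\mathbb{F}_2$, the multi-linear function $f(W_{\mathcal{K}})=\prod_{k\in[K]}W_k$ in (\ref{multilinear_function}) is the logical AND of the subfunctions, so that it factorizes through partial products computed at a carefully chosen set of non-overlapping servers. I would then bound the sum-rate not by the unconditional quantity of Theorem~\ref{theo_cyclic_placement_general_source-general_function_graph-rate_UB}, but by a telescoping sum of \emph{conditional} characteristic graph entropies, which is where the savings arise.

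First I would set up the combinatorial covering. Starting from the cyclic placement (\ref{cyclic_Zi}), in which each server stores $M=\Delta(N-N_r+1)$ datasets and the ``new'' datasets advance by $\Delta$ indices between consecutive servers, I would select the servers indexed by $i_\ell=1+(\ell-1)(N-N_r+1)$ for $\ell\in[N^*]$, with $N^*=\lfloor N/(N-N_r+1)\rfloor$. Spacing the chosen servers by $N-N_r+1$ shifts their index windows by exactly $\Delta(N-N_r+1)=M$ positions, so the sets $\mathcal{Z}_{i_\ell}$ can be pruned to disjoint blocks $\mathcal{Z}'_{i_\ell}$ of size $M$ covering $N^*M$ datasets in total; the largest index satisfies $i_{N^*}\le N_r\le N$ and is thus admissible. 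If $\Delta_N=N-N^*(N-N_r+1)>0$, then $K-N^*M=\Delta\Delta_N=\Deltaprod$ datasets remain uncovered, and I would assign them to one additional server (e.g.\ the server at index $1+N^*(N-N_r+1)=1+N-\Delta_N$) whose storage contains them. This step must be done with care, as it requires verifying that the index arithmetic in (\ref{cyclic_Zi}), including the wrap-around convention, yields a genuine disjoint partition using only servers in $[N]$.

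Next I would define the partial products $Y_\ell=\prod_{k\in\mathcal{Z}'_{i_\ell}}W_k$, so that $f(W_{\mathcal{K}})=\prod_{\ell}Y_\ell$, and note that because the blocks are disjoint and the $W_k$ are i.i.d.\ ${\rm Bern}(\epsilon)$, the $Y_\ell$ are independent with $Y_\ell\sim{\rm Bern}(\epsilon_M)$ for $\ell\in[N^*]$ and the extra product distributed as ${\rm Bern}(\epsilon_{\Deltaprod})$. As established in Proposition~\ref{prop:general_placement__correlated_Boolean_function}, the characteristic graph $G_{X_{i_\ell}}$ for computing $f$ has exactly two maximal independent sets, corresponding to $Y_\ell=0$ and $Y_\ell=1$. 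I would then invoke the achievability of a successive Slepian--Wolf--type scheme in which server $i_\ell$ is encoded against the previously decoded partial products $Y_1,\dots,Y_{\ell-1}$ as decoder side information, which is admissible even for non-collaborating servers since the encoder only uses binning while the decoder supplies the side information; this yields the achievable sum-rate $\sum_\ell H_{G_{X_{i_\ell}}}(X_{i_\ell}\,\vert\,Y_1,\dots,Y_{\ell-1})$.

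The crux is evaluating each conditional term. Given the event that some earlier factor vanishes (which has probability $1-(\epsilon_M)^{\ell-1}$ by independence), the value $f=0$ is already determined, so the conditional characteristic graph of $X_{i_\ell}$ is edgeless and contributes zero rate; given instead that all earlier factors equal one (probability $(\epsilon_M)^{\ell-1}$), the decoder must still resolve the two independent sets of $G_{X_{i_\ell}}$, costing $h(\epsilon_M)$. Hence $H_{G_{X_{i_\ell}}}(X_{i_\ell}\,\vert\,Y_1,\dots,Y_{\ell-1})=(\epsilon_M)^{\ell-1}h(\epsilon_M)$, and likewise the additional server (if $\Delta_N>0$) contributes $(\epsilon_M)^{N^*}h(\epsilon_{\Deltaprod})$. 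Summing the geometric series $\sum_{\ell=1}^{N^*}(\epsilon_M)^{\ell-1}h(\epsilon_M)=\frac{1-(\epsilon_M)^{N^*}}{1-\epsilon_M}h(\epsilon_M)$ and adding the last term yields (\ref{sum_rate_multishot_multilinear}). I expect the main obstacle to be the rigorous justification that the conditional characteristic graph collapses to an edgeless graph under determining side information and that the successive conditional-graph-entropy sum-rate is genuinely achievable in the distributed setting; the disjoint-covering count from the cyclic placement is a secondary but necessary technical point.
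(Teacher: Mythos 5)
Your proposal is correct and follows essentially the same route as the paper's proof: select $N^*$ servers holding disjoint blocks of $M$ subfunctions, bound the sum-rate by the telescoping sum of conditional characteristic graph entropies $H_{G_{X_{i_1}}}(X_{i_1})+\sum_{\ell\geq 2}H_{G_{X_{i_\ell}}}(X_{i_\ell}\,\vert\,Z_{i_1},\dots,Z_{i_{\ell-1}})$, evaluate each term as $(\epsilon_M)^{\ell-1}h(\epsilon_M)$ since the $\ell$-th server's graph is edgeless unless all preceding partial products equal one, sum the geometric series, and add the $(\epsilon_M)^{N^*}1_{\Delta_N>0}h(\epsilon_{\xi_N})$ term for the residual $\xi_N$ subfunctions. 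Your write-up is in fact somewhat more explicit than the paper's on the disjoint-covering index arithmetic and on why the conditional graph collapses under determining side information, both of which the paper treats tersely.
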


\begin{proof}
See Appendix \ref{proof-prop:multi_shot_multilinear_function}.
\end{proof}

We will next detail two numerical examples (Subsections~\ref{ex:binary_lin_sep}-\ref{ex_prop:multi_shot_multilinear_function}) to showcase the achievable gains in the total communication cost for Proposition~\ref{prop:general_placement__correlated_Boolean_function} and Proposition~\ref{prop:multi_shot_multilinear_function}, respectively.

\section{Numerical Evaluations to Demonstrate the Achievable Gains}
\label{sec:examples}
Given $\mathcal{T}(N, K, K_c, M, N_r)$, 
to gain insight into our analytical results and demonstrate the savings in the total communication cost, we provide some numerical examples. To demonstrate  Proposition~\ref{prop:general_placement__correlated_Boolean_function}, in Subsection~\ref{ex:binary_lin_sep}, we focus on computing linearly separable functions, and in Subsection~\ref{ex_prop:multi_shot_multilinear_function} (cf.~Proposition~\ref{prop:multi_shot_multilinear_function}), we focus on multi-linear functions, respectively.

To that end, to characterize the performance of our characteristic graph-based approach for linearly separable functions, we denote by $\eta_{lin}$ the gain of the sum-rate for the characteristic graph-based approach given in (\ref{eq:bernouilli_correlated_subfunctions_communication_rate_upper_bound}) over the sum-rate of the distributed scheme of Wan {\em et al.} in \cite{wan2021distributed}, given in (\ref{eq:linearly_separable_communication_rate_upper_bound}), and by $\eta_{SW}$ the gain of the sum-rate in 
(\ref{eq:bernouilli_correlated_subfunctions_communication_rate_upper_bound}) over the sum-rate of the fully distributed approach of Slepian-Wolf \cite{SlepWolf1973}.
To capture general statistics, i.e., dataset skewness and correlations, and make a fair comparison, we adapt the transmission model of Wan {\em et al.} in \cite{wan2021distributed} via modifying the i.i.d. dataset assumption. 

We next study an example scenario (Subsection~\ref{ex:binary_lin_sep}) for computing a class of linearly separable functions (\ref{linearly_separable_functions}) over $\mathbb{F}_2$, where each of the demanded functions takes the form $f_j(W_{\mathcal{K}})=\sum\nolimits_{k\in [K]} \gamma_{jk} W_k \mod 2$, $j\in[K_c]$, under a specific correlation model across subfunctions. 
More specifically, when the subfunctions $W_k\sim{\rm Bern}(\epsilon)$ are identically distributed and correlated across $k\in [K]$, and $\Delta\in\mathbb{Z}^+$, we model the correlation across datasets (a) exploiting the joint PMF model in \cite[Theorem~1]{DinizBayesian2010} and (b) for a joint PMF described in Table~\ref{tab:joint_pmf}. 
Furthermore, we assume for $K_c>1$ that ${\bf \Gamma}=\{\gamma_{jk}\}\in\mathbb{F}_2^{K_c\times K}$ is full rank. 
For the proposed setting, we next demonstrate the achievable gains $\eta_{lin}$ 
of our proposed technique versus $\epsilon$ for computing (\ref{linearly_separable_functions}), as a function of skew, $\epsilon$, and correlation, $\rho$, of datasets, $K_c\in[N_r]<K$, and other system parameters, and showcase the results via Figures~\ref{fig:kc1corr}, \ref{fig:K_c_2_0_corr}, \ref{fig:kc2_special_func}, and \ref{fig:gain_for_kcgeneral_0corrolation}.

\subsection{Example case: Distributed computing of linearly separable functions over $\mathbb{F}_2$}
\label{ex:binary_lin_sep}


We consider the computation of linearly separable functions given in  (\ref{linearly_separable_functions}) for general topologies, with general $N$, $K$, $M$, $N_r$, $K_c$, over $\mathbb{F}_2$, with an identical skew parameter $\epsilon\in [0,1]$ for each subfunction, where $W_k\sim$Bern($\epsilon$), $k\in[K]$, using cyclic placement as in (\ref{cyclic_Zi}), and incorporating the correlation between the subfunctions, with the correlation coefficient denoted by $\rho$. We consider three scenarios, as described next:

\paragraph{\bf Scenario~I. The number of demanded functions is $K_c=1$, where the subfunctions could be uncorrelated or correlated.} 
This scenario is similar to the setting in \cite{wan2021distributed} whereas different from \cite{wan2021distributed} which is valid over a field of characteristic $q>3$, we consider $\mathbb{F}_2$, and in the case of correlations, i.e., when $\rho>0$, we capture the correlations across the transmissions (evaluated from subfunctions of datasets) from distributed servers, as detailed earlier in Section~\ref{results}. 
We first assume that the subfunctions are not correlated, i.e., $\rho=0$, and evaluate $\eta_{lin}$ for $f(W_{\mathcal{K}})=\sum\nolimits_{k\in [K]} W_k \mod 2$. 
The parameter of $f(W_{\mathcal{K}})$, i.e., the probability that $f(W_{\mathcal{K}})$ takes the value $1$ can be computed using the recursive relation:
\begin{align}
\label{epsilon_l_recursive}
\mathbb{P}\Big(\sum\limits_{k\in \mathcal{S}\,:\, |\mathcal{S}|=l\leq K} W_k \mod 2 = 1\Big)&=\sum\limits_{k\in\mathcal{S}\,:\,\,  |\mathcal{S}|=l\leq K\ , \, k\, {\rm odd}}\mathbb{P}(B(K,\epsilon))=k)\nonumber\\
&=(1-\epsilon_{l-1})\cdot \epsilon+\epsilon_{l-1}\cdot (1-\epsilon)\nonumber\\
&\doteq\,\epsilon_l\ ,\quad 1<l\leq K \ ,
\end{align}
where $B(K,\epsilon)$ is the binomial PMF,  
and $\epsilon_l$ is the probability that the modulo $2$ sum of any $1<l\leq K$ subfunctions taking the value one, with $W_k\sim{\rm Bern}(\epsilon)$ being i.i.d. across $k\in\mathcal{S}$, with the convention $\epsilon_1=\epsilon$.

Given $N_r$, we denote by $N^*=\left\lfloor \frac{N}{N-N_r+1}\right\rfloor$ the minimum number of servers, corresponding to the subset $\mathcal{N}^*\subseteq\Omega$, needed to compute $f(W_{\mathcal{K}})$, where each server, with a cache size of $M$, computes a sum of $M$ subfunctions, where across these $N^*$ servers, the sets of subfunctions are disjoint. Hence, $\mathbb{P}\Big(\sum\nolimits_{k\in \mathcal{S}\,:\, |\mathcal{S}|=M} W_k\Big)=\epsilon_M$. Furthermore, the variable $\Delta_N=N-N^*\cdot (N-N_r+1)$ represents whether additional servers in addition to $N^*$ servers are needed to aid the computation, and if $\Delta_N\geq 1$, then $\Delta\cdot \Delta_N \doteq \Deltaprod$ denotes the number of subfunctions to be computed by the set of additional servers, namely $\mathcal{I}^*\in\Omega$, and similarly as above, $\mathbb{P}\Big(\sum\nolimits_{k\in \mathcal{S}\,:\, |\mathcal{S}|={\Deltaprod}} W_k\Big)=\epsilon_{\Deltaprod}$, which is obtained evaluating $\epsilon_l$ at $l=\Deltaprod$.

Adapting (\ref{eq:linearly_separable_communication_rate_upper_bound}) for $\mathbb{F}_2$, we obtain the total communication cost $R_{ach}(lin)$ for computing the linearly separable function $f(W_{\mathcal{K}})=\sum\nolimits_{k\in [K]} W_k \mod 2$ %
as 
\begin{align}
\label{cost_lin_scenario_I}
R_{ach}(lin)= \sum\limits_{i=1}^{N_r} h\Big(\sum\limits_{k\in\mathcal{Z}_i}W_k\Big)=N_r\cdot h(\epsilon_M)  \ .  
\end{align}

Using Proposition~\ref{prop:general_placement__correlated_Boolean_function} and (\ref{epsilon_l_recursive}), we derive the sum rate for distributed lossless computing of $f(W_{\mathcal{K}})$ as
\begin{align}
\label{sum_rate_multishot_affine}
\sum\limits_{i\in\Omega}R_i\leq N^* \cdot h(\epsilon_M)+1_{\Delta_N>0}\cdot h(\epsilon_{\Deltaprod})\ ,
\end{align}
where the indicator function $1_{\Delta_N>0}$ captures the rate contribution from the additional server, if any.  Using (\ref{sum_rate_multishot_affine}), the gain $\eta_{lin}$ over the linearly separable solution of \cite{wan2021distributed} is presented as:
\begin{align}
\label{eta_lin_scenario_I}
\eta_{lin}=\frac{N_r\cdot h(\epsilon_M)}{N^* \cdot h(\epsilon_M)+1_{\Delta_N>0}\cdot h(\epsilon_{\Deltaprod})}\ ,    
\end{align}
where $h(\epsilon_{\Deltaprod})$ represents the rate needed from the set of additional servers $\mathcal{I}^*\in\Omega$, aiding the computation through communicating the sum of the remaining subfunctions in the set $\mathcal{C}\subseteq \mathcal{Z}_{\mathcal{I}^*}$, where the summation for these remaining functions in $\mathcal{C}\subseteq \mathcal{Z}_{\mathcal{I}^*}$ is denoted as $\sum_{k\in\mathcal{C}\subseteq \mathcal{Z}_{\mathcal{I}^*}\ :\ k\notin \bigcup_{i\in \mathcal{N}^*}\mathcal{Z}_i \ , \,\,  |\mathcal{C}|=\Deltaprod} W_k$, that cannot be captured by the set $\mathcal{N}^*$.

Given $K_c=1$ for the given modulo $2$ sum function, we next incorporate the correlation model in \cite{DinizBayesian2010}, for each $W_k$, identically distributed with $W_k\sim{\rm Bern}(\epsilon)$, and correlation $\rho$ across any two subfunctions. The formulation in \cite{DinizBayesian2010} yields the following PMF for $f(W_{\mathcal{K}})$:
\begin{align}
\label{ex:corr_model_probability} 
\mathbb{P}(f(W_{K})=y)&= \binom{K}{y} \epsilon^{y}(1-\epsilon)^{K-y} (1-\rho) \cdot 1_{y\in A_1}\nonumber\\
&+  \epsilon^{\frac{y}{K}}(1-\epsilon)^{\frac{K-y}{K}}\rho \cdot 1_{y\in A_2} \ , \, y\in \{0,\cdots,K\} \ , 
\end{align}
where $1_{y\in A_1}$ and $1_{y\in A_2}$ are indicator functions, where ${A_1}=\{0,1,\cdots,K\}$ and ${A_2}=\{0,K\}$.

We depict the behavior of our gain, $\eta_{lin}$, using the same topology $\mathcal{T}(N, K, K_c, M, N_r)$ as in \cite{wan2021distributed}, with different system parameters $(N, K, M, N_r)$, under $\rho=0$ in Figure~\ref{fig:kc1corr}-(Left). As we increase both $N$ and $K$, along with the number of active servers, $N_r$, the gain, $\eta_{lin}$, of the characteristic graph approach increases. This stems from the characteristic graph approach, to compute functions $f(W_{\mathcal{K}})$ of $W_{\mathcal{K}}$ using $N^*$ servers.
From Figure~\ref{fig:kc1corr}-(Right), it is evident that by capturing correlations between the subfunctions, hence across the servers' caches, $\eta_{lin}$ grows more rapidly until it reaches the maximum of~(\ref{eta_lin_scenario_I}), corresponding to $\eta_{lin}=\frac{N_r}{N^*}=10$, attributed to full correlation (see, Figure~\ref{fig:kc1corr}-(Right)).

\begin{figure*}[t!]
\centering
\includegraphics[width=0.4\textwidth]{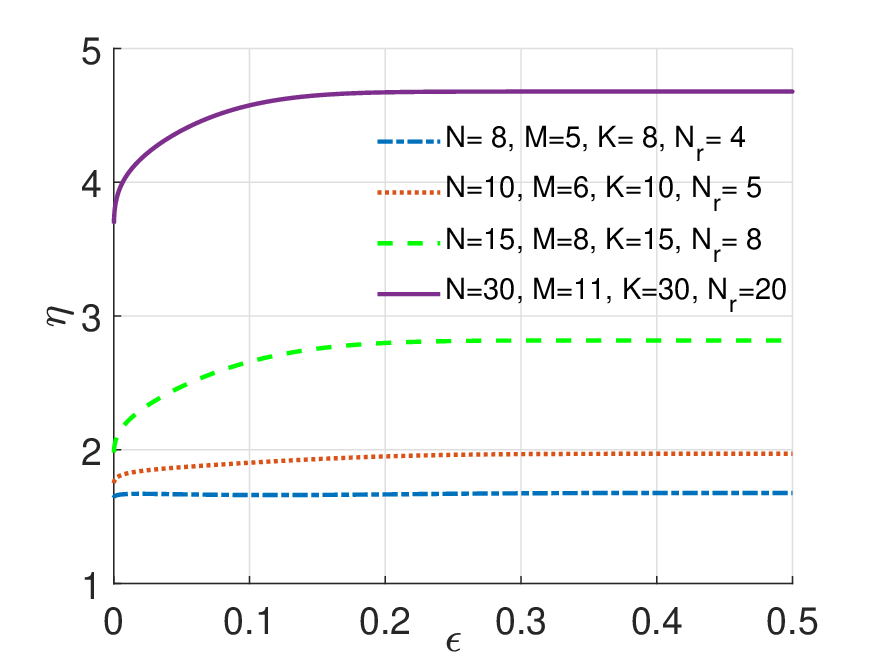}
\includegraphics[width=0.4\textwidth]{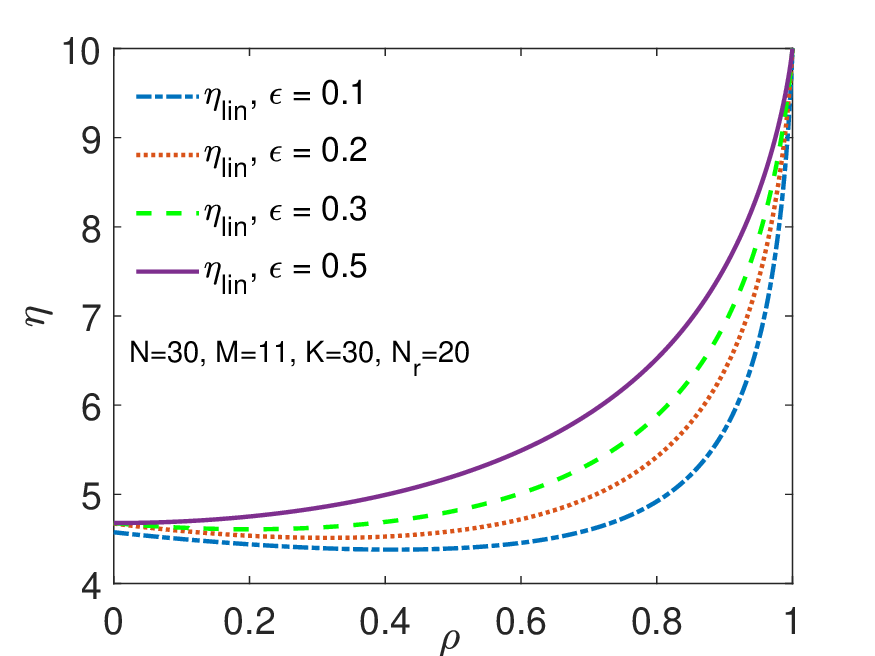}
\vspace{-0.2cm}
\caption{\small{The gain $\eta_{lin}$ of the characteristic graph approach for $K_c=1$, in 
Subsection~\ref{ex:binary_lin_sep} (Scenario~I). (Left) $\rho=0$ for various distributed topologies. (Right) The correlation model given as (\ref{ex:corr_model_probability}) for $\mathcal{T}(30, 30, 1, 11, 20)$ with different $\epsilon$ values.}}
\label{fig:kc1corr}
\vspace{-0.2cm}
\end{figure*}

What we can also see is that for $\rho=0$, the gain rises with the increase of $\epsilon$ and linearly grows with $\frac{N_r}{N^*}$. As $\rho$ increases reaching its maximum at $\rho=1$, the gain is maximized, yielding the minimum communication cost that can be achieved with our technique. Here, the gain $\eta_{lin}$ is dictated by the topology and is given as $\eta_{lin}=\frac{N_r}{N^*}$. This linear relation shows that this specific topology can provide a very substantial reduction in the total communication cost, as $\rho$ goes to $1$, over the state-of-the-art \cite{wan2021distributed}, as shown in Figure~\ref{fig:kc1corr}-(Right) via the purple (solid) curve. Furthermore, one can draw a comparison between the characteristic graph approach and the approach in \cite{SlepWolf1973}. Here, we represent the gain as $\eta_{SW}$. It is noteworthy that the sum rate of all servers using the coding approach of Slepian-Wolf \cite{SlepWolf1973} is $R_{ach}(SW)=H(W_{\mathcal{K}})$. With $\rho=0$, this expression simplifies to $R_{ach}(SW)=K\cdot H(W_k)$, resulting again in a substantial reduction in the communication cost as we see from $R_{ach}(lin)$ in (\ref{cost_lin_scenario_I}) for the same topology of the purple (solid) curve as shown in Figure~\ref{fig:kc1corr}-(Right).

\paragraph{\bf Scenario~II. The number of demanded functions is $K_c=2$, where the subfunctions could be uncorrelated or correlated.} 
To gain insights into the behavior of $\eta_{lin}$, we consider an example distributed computation model with $K=N=3$, $N_r=2$, where the subfunctions $W_1,\,W_2,\,W_3$ are assigned to $X_1$, $X_2$ and $X_3$ in a cyclic manner, with $h(W_k)= \epsilon$, $k\in[3]$, and $K_c=2$ with  $f_1(W_{\mathcal{K}})=W_2$, and $f_2(W_{\mathcal{K}})=W_2+ W_3$.

Given $N_r=2$, using the characteristic graph approach for individual servers, an achievable compression scheme, for a given ordering $i$ and $j$ of server transmissions, relies on first compressing of the characteristic graph $G_{X_i}$ constructed by server $i\in\Omega$ that has no side information, and then the conditional rate needed for compressing the colors of  $G_{X_j}$ for any other server $j\in \Omega\backslash i$ via incorporating the side information $Z_i=g_i(X_i)$ obtained from server $i\in\Omega$. Thus, contrasting the total communication cost associated with the possible orderings, the minimum total communication cost $R_{ach}(G)$ can be determined\footnote{We can generalize (\ref{min_charac_entorpy_over_servers}) to $N_r>2$, where, for a given ordering of server transmissions, any consecutive server that transmits sees all previous transmissions as side information and the best ordering that has the minimum total communication cost, i.e., $R_{ach}(G)$.}. 
The achievable sum rate here takes the form 
\begin{align}
\label{min_charac_entorpy_over_servers}
 R_{ach}(G)= \min\{H_{G_{X_1}}(X_1)+ H_{G_{X_2}}(X_2\,\vert\,Z_1),\quad  H_{G_{X_2}}(X_2)+ H_{G_{X_1}}(X_1\,\vert\,Z_2)\} \ . 
\end{align}

\begin{figure}
    \centering
    \includegraphics[width=0.8\linewidth]{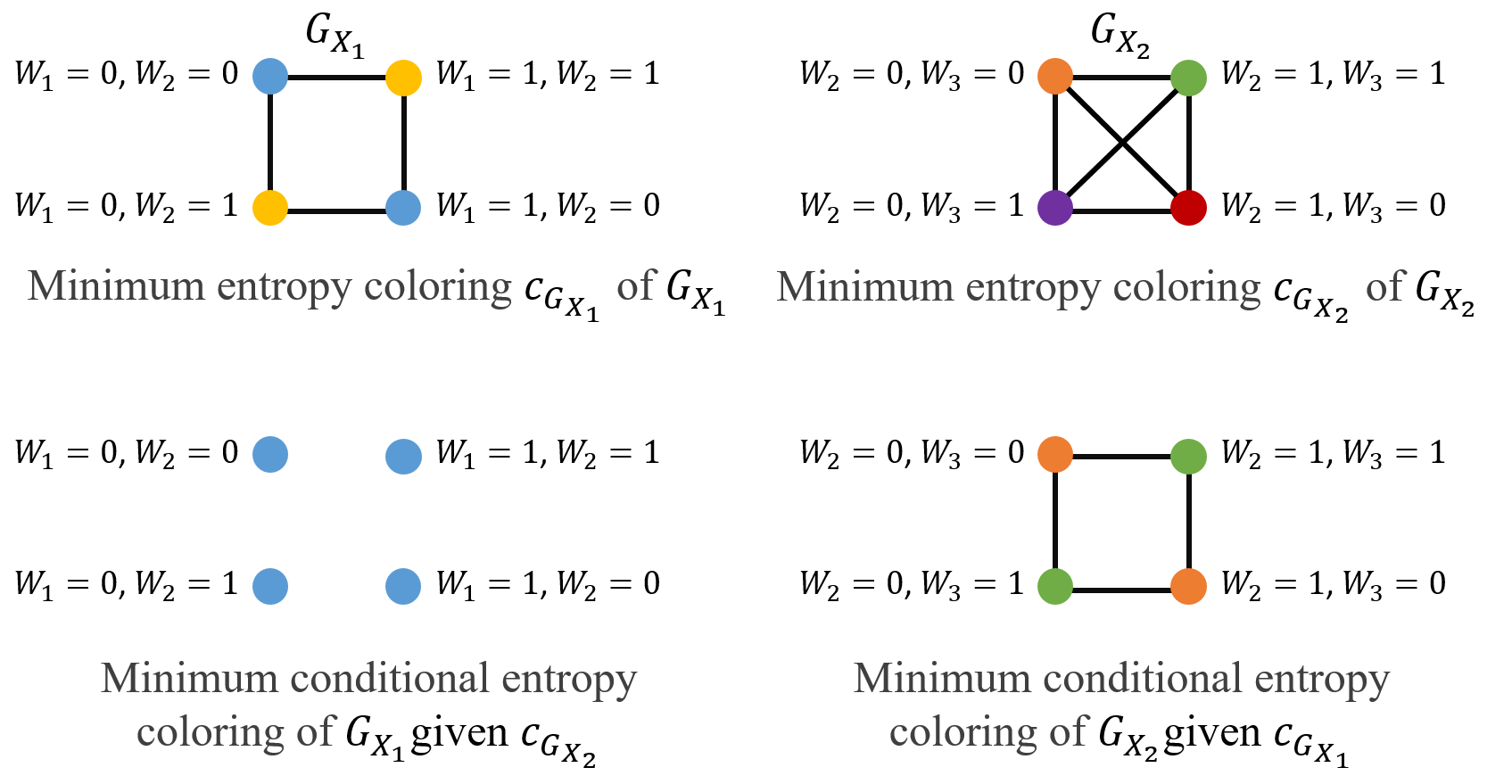}
    \caption{\small{Colorings of graphs in Subsection~\ref{ex:binary_lin_sep} (Scenario~II). (Top Left-Right) Characteristic graphs $G_{X_1}$ and $G_{X_2}$, respectively. 
    (Bottom Left-Right) The minimum conditional entropy colorings of $G_{X_1}$ given $c_{G_{X_2}}$, and $G_{X_2}$ given $c_{G_{X_1}}$, respectively.}}
    \label{fig:chargraph}
\end{figure}

Focusing on the characteristic graph approach, we illustrate how each server builds its union characteristic graph for simultaneously computing $f_1$ and $f_2$ according to (\ref{union_graph}) (as detailed in Appendix~\ref{app-graph-encode}), in Figure~\ref{fig:chargraph}. 
In (\ref{min_charac_entorpy_over_servers}), the first term corresponds to $G_{X_1}=(V_{X_1},E_{X_1})$ where $V_{X_1}=\{0,1\}^2$ is  built using the support of $W_1$ and $W_2$, and the edges $E_{X_1}$ are built based on the rule that $(x_1^1,x_1^2)\in E_{X_1}$ if $F(x_1^1,x_2)\neq F(x_1^2,x_2)$ for some $x_2\in V_{X_2}$, where, as we see here, requires $2$ colors. Similarly, server $2$ constructs $G_{X_2}=(V_{X_2},E_{X_2})$ given $Z_1$, where $V_{X_2}=\{0,1\}^2$ using the support of $W_2$ and $W_3$, and where $Z_1$ determines $f_1=W_2$, and hence, to compute $f_2=W_2+W_3$ given $f_1=W_2$, any two vertices taking values\footnote{Here, $x_2^1=(w_2^1,w_3^1)$ and $x_2^2=(w_2^2,w_3^2)$ represent two different realizations of the pair of subfunctions $W_2$ and $W_3$.} $x_2^1=(w_2^1,w_3^1)\in V_{X_2}$ and $x_2^2=(w_2^2,w_3^2)\in V_{X_2}$ are connected if $w_3^1\neq w_3^2$. Hence, we require $2$ distinct colors for $G_{X_2}$. 
As a result, the first term yields a sum rate of $h(\epsilon)+h(\epsilon)=2h(\epsilon)$.
Similarly, the second term of (\ref{min_charac_entorpy_over_servers}) captures the impact of $G_{X_2}=(V_{X_2}, E_{X_2})$, where server $2$ builds $G_{X_2}$ using the support of $W_2$ and $W_3$, and $G_{X_2}$ is a complete graph to distinguish all possible binary pairs to compute $f_1$ and $f_2$, requiring $4$ different colors. Given $Z_2$, both $f_1$ and $f_2$ are deterministic. Hence, given $Z_2$, $G_{X_1}$ has no edges, which means that $H_{G_{X_1}}(X_1\,\vert\, Z_2)=0$. 
As a result, the ordering of server transmission given by the second term of (\ref{min_charac_entorpy_over_servers}) yields the same sum rate of $2h(\epsilon)+0=2h(\epsilon)$. 
For this setting, the minimum required rate is $R_{ach}(G) = 2h(\epsilon)$, and the configuration captured by the second term provides a lower recovery threshold of $N_r=1$ versus $N_r=2$ for the configurations of server transmissions given by the first term  (\ref{min_charac_entorpy_over_servers}). The different $N_r$ achieved by these two configurations is also captured by Figure~\ref{fig:chargraph}.  

Alternatively, in the linearly separable approach \cite{wan2021distributed}, $N_r$ servers transmit the requested function of the datasets stored in their caches. For distributed computing of $f_1$ and $f_2$, servers $1$ and $2$ transmit at rate $H(W_2)=h(\epsilon)$, for computing $f_1$, and at rate $H(W_2 + W_3)$, for function $f_2$. As a result, the achievable communication cost is given by $R_{ach}(lin) = h(\epsilon) + h(W_2 + W_3)$. 
Here, for a fair comparison, we update the model studied in \cite{wan2021distributed} to capture the correlation within each server without accounting for the correlation across the servers.

Under this setting, for $\rho=0$, we see that the gain $\eta_{lin}$ of the characteristic graph approach over the linearly separable solution of \cite{wan2021distributed} for computing $f_1$ and $f_2$ as a function of $\epsilon \in [0, 1]$ takes the form
\begin{align}
\label{eta_scenario_II_rho_0}
\eta_{lin}(\epsilon) &= \frac{h(\epsilon) + h(2\epsilon(1-\epsilon))}{2h(\epsilon)}
\begin{cases}
    =1 & ,\,\, \epsilon = \{\frac{1}{2}\} \ , \\
    >1 & ,\,\, \epsilon\in  [0,1]\backslash \{\frac{1}{2}\} \ , 
\end{cases}
\end{align}
where $\eta_{lin}(\epsilon)>1$ for $\epsilon\neq \frac{1}{2}$ follows from the concavity of $h(\cdot)$ that yields the inequality $h(2\epsilon(1-\epsilon)) \geq h(\epsilon)$. Furthermore, $\eta_{lin}$ approaches $1.5$ as $\epsilon \to \{0, 1\}$ (see Figure~\ref{fig:K_c_2_0_corr}).
\begin{figure}
    \centering
    \includegraphics[width=0.45\linewidth]{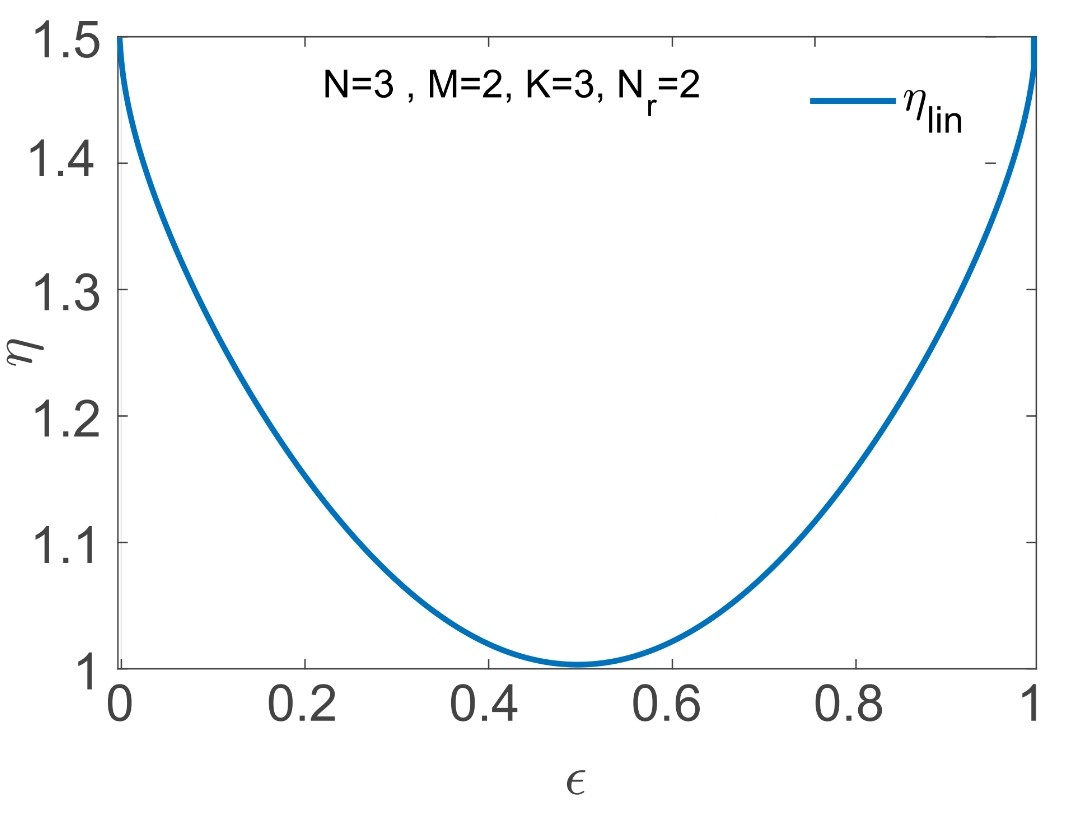}
    \vspace{-0.3cm}
    \caption{$\eta_{lin}$ in (\ref{eta_scenario_II_rho_0}) versus $\epsilon$, for distributed computing of $f_1=W_2$ and $f_2=W_2+ W_3$, where $K_c=2$, $N_r=2$, with $\rho=0$, 
    in Subsection~\ref{ex:binary_lin_sep} (Scenario~II).}
    \label{fig:K_c_2_0_corr}
\end{figure}

\begin{table}[h!]\small
\centering
\begin{tabular}{cc|cc}
   &  $P_{W_2,W_3}(W_2,W_3)$     & $W_2=0$  & $W_2=1$  \\ \hline
   & $W_3=0$ & $(1-\epsilon)(1-p')$ & $\epsilon p$ \\
   & $W_3=1$ & $\epsilon p$ & $\epsilon(1-p)$ \\
\end{tabular}
\caption{\small{Joint PMF $P_{W_2,W_3}$ of $W_2$ and $W_3$ with a crossover parameter $p$, 
in Subsection~\ref{ex:binary_lin_sep} (Scenario~II).}}
\label{tab:joint_pmf}
\end{table}

We next examine the setting where the correlation coefficient $\rho$ is nonzero, using the joint PMF $P_{W_2, W_3}$, as depicted in Table~\ref{tab:joint_pmf}, of the required subfunctions ($W_2$ and $W_3$) in computing $f_1$ and $f_2$. This PMF describes the joint PMF corresponding to a binary non-symmetric channel model, where the correlation coefficient between $W_2$ and $W_3$ is $\rho=\frac{1-p}{1-\epsilon}$, and where
$p'=\frac{\epsilon p}{1-\epsilon}$. Thus our gain here compared to the linearly separable encoding and decoding approach of \cite{wan2021distributed} is given as
\begin{align}
\label{eta_scenario_II_rho_nonzero_TableII}
\eta_{lin}=\frac{H(W_2)+H(W_2 + W_3)}{H(W_2, W_3)}
=\frac{h(\epsilon)+h(2\epsilon p)}{h(\epsilon)+(1-\epsilon)h\big(\frac{\epsilon p}{1-\epsilon}\big)+\epsilon h(p)} \ .
\end{align} 

We consider now the correlation model in Table~\ref{tab:joint_pmf}, where coefficient $\rho$ rises in $\epsilon$ for a fixed $p$. In Figure~\ref{fig:kc2_special_func}-(Left), we illustrate the behavior of $\eta_{lin}$, given by (\ref{eta_scenario_II_rho_nonzero_TableII}), for computing $f_1$ and $f_2$ for $N_r=2$ as a function of $p$ and $\epsilon$, where for this setting, the correlation coefficient $\rho$ is a decreasing function of $p$ and an increasing function of $\epsilon$. 
We observe from (\ref{eta_scenario_II_rho_nonzero_TableII}) that the gain $\eta_{lin}$ satisfies $\eta_{lin}\geq 1$ for all $\epsilon\in[0,1]$, which monotonically increases in $p$ --- and hence monotonically decreases in $\rho$ due to the relation $\rho=\frac{1-p}{1-\epsilon}$ --- as a function of the deviation of $\epsilon$ from $1/2$. 
For $\epsilon\in (0.5, 1]$, $\eta_{lin}$ increases in $\epsilon$. For example, for $p=0.1$ then $\eta_{lin}(1)=1.28$, as depicted by the green (solid) curve. 
Similarly, given $\epsilon\in [0, 0.5)$, decreasing $\epsilon$ results in $\eta_{lin}$ to exhibit a rising trend, e.g., for $p=0.9$ then $\eta_{lin}(0)=1.36$, as shown by the red (dash-dotted) curve. 
As $p$ approaches one, $\eta_{lin}$ goes to $1.5$ as $\epsilon$ tends to zero, which can be derived from (\ref{eta_scenario_II_rho_nonzero_TableII}). 
We here note that the gains are generally smaller than in the previous set of comparisons as shown in Figure~\ref{fig:K_c_2_0_corr}.

More generally, given a user request, consisting of $K_c=2$ linearly separable functions (i.e., satisfying  (\ref{linearly_separable_functions})), and after considering (\ref{eta_scenario_II_rho_nonzero_TableII}) beyond $N_r=2$, we see that $\eta_{lin}$ is at most $N_r$ as $\rho$ approaches one. 
We next use the joint PMF model used in obtaining (\ref{ex:corr_model_probability}), where we observe that $f_2\sim ((1-\epsilon)^2(1-\rho)+(1-\epsilon)\rho,\, 2\epsilon(1-\epsilon)(1-\rho),\, \epsilon^2(1-\rho)+\epsilon\rho)$, to see that the gain takes the form 
\begin{align}
\label{eta_scenario_II_rho_nonzero_Eqn3}
\eta_{lin}
=\frac{h(\epsilon)+H(f_2)}
{h(\epsilon) + (1-\epsilon)h(\zeta_1) + \epsilon h(\zeta_2)} \ ,
\end{align}
where $\zeta_1=(1-\epsilon)(1-\rho)+\rho$, and $\zeta_2=(1-\epsilon)(1-\rho)$. For this model, we illustrate $\eta_{lin}$ versus $\epsilon$ in Figure~\ref{fig:kc2_special_func}-(Right) for different $\rho$ values. 
Evaluating (\ref{eta_scenario_II_rho_nonzero_Eqn3}), the peak achievable gain is attained when $\rho=1$ at $f_2\sim ((1-\epsilon),\, 0,\, \epsilon)$, yielding $H(W_2+W_3)=h(\epsilon)$ and $H(W_3\,\vert\,W_2)=(1-\epsilon)h(\rho)=0$, and hence, a gain $\eta_{lin} = N_r=2$, as shown by the purple (solid) curve.    
On the other hand, for $\rho=0$, we observe that $f_2\sim ((1-\epsilon)^2,\, 2\epsilon(1-\epsilon),\, \epsilon^2)$, yielding $H(W_2+W_3)=h((1-\epsilon)^2,\, 2\epsilon(1-\epsilon),\, \epsilon^2)=h(2\epsilon(1-\epsilon))+((1-\epsilon)^2+\epsilon^2)h\Big(\frac{\epsilon^2}{\epsilon^2+(1-\epsilon)^2}\Big)$ and $H(W_3\,\vert\,W_2)=(1-\epsilon)h(\epsilon) + \epsilon h(\epsilon)=h(\epsilon)$, and hence, it can be shown that the gain is lower bounded as $\eta_{lin}\geq 1.25$.

\begin{figure*}[t!]
    \centering   
    \includegraphics[width=0.45\textwidth]{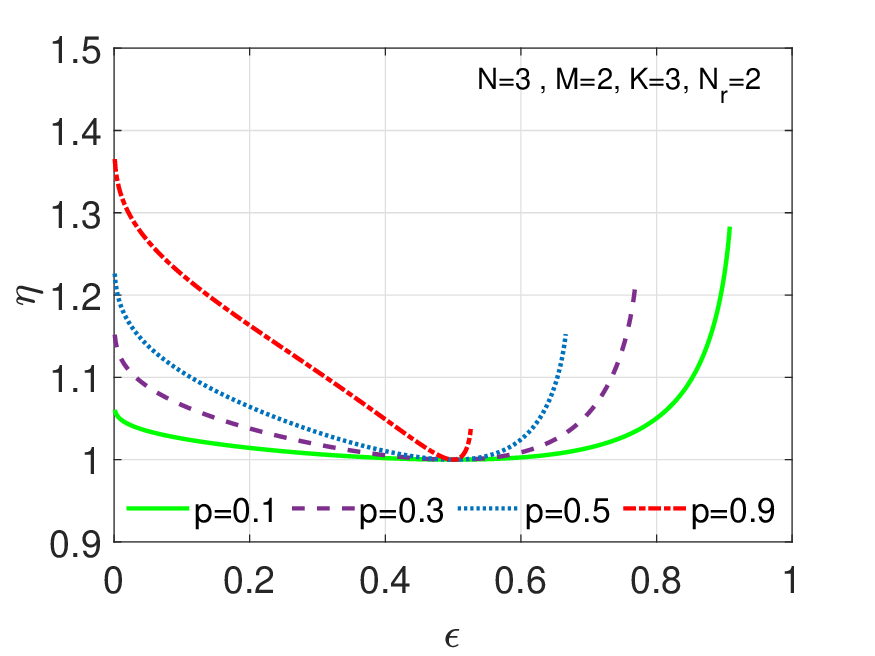}
    \includegraphics[width=0.45\textwidth]{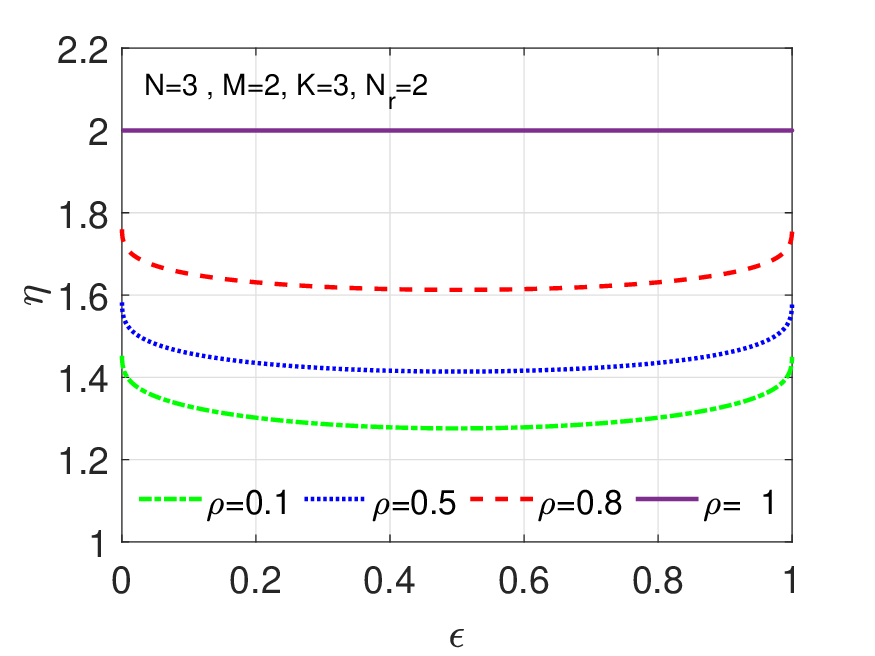}
    \vspace{-0.2cm}
    \caption{\small{$\eta_{lin}$ versus $\epsilon$, for distributed computing of $f_1=W_2$ and $f_2=W_2+ W_3$, where $K_c=2$, $N_r=2$, 
    in Subsection~\ref{ex:binary_lin_sep}, using different joint PMF models for $P_{W_2,W_3}$ (Scenario~II). 
    (Left) $\eta_{lin}$ in (\ref{eta_scenario_II_rho_nonzero_TableII}) for the joint PMF in Table~\ref{tab:joint_pmf}, for different values of $p$. 
    (Right) $\eta_{lin}$ for the joint PMF in (\ref{ex:corr_model_probability}), for different values of $\rho$ .}}
\label{fig:kc2_special_func}
\end{figure*}

\paragraph{\bf Scenario~III. The number of demanded functions is $K_c\in[N_r]$, and the number of datasets is equal to the number of servers, i.e., $K=N$, where the subfunctions are uncorrelated.} 
We now provide an achievable rate comparison between the approach in \cite{wan2021distributed}, and our graph-based approach, as summarized by our Proposition~\ref{prop_cyclic_placement_iid_uniform_source-linear_function_graph-rate_UB} that generalizes the result in \cite[Theorem 2]{wan2021distributed} to finite fields with characteristics $q\geq 2$, for the case of $\rho=0$. 

Here, to capture dataset skewness and make a fair comparison, we adapt the transmission model of Wan {\em et al.} in \cite{wan2021distributed} via modifying the i.i.d. dataset assumption and taking into account the skewness incurred within each server in determining the local computations $\sum\nolimits_{k\in \mathcal{S}\,:\, |\mathcal{S}|=M} W_k$ at each server.

For the linearly separable model in (\ref{linearly_separable_functions}) adapted to account for our setting, exploiting the summation $\sum_{k\in \mathcal{Z}_i}W_k$, and $\epsilon_M$ given in (\ref{sum_rate_multishot_affine}), the communication cost for a general number of $K_c$ with $\rho=0$ is expressed as:
\begin{align}
\label{ex:general_K_c_with_rho_0_lin}        
R_{ach}(lin)= N_r\cdot h(\epsilon_M) \ .
\end{align}
In (\ref{ex:general_K_c_with_rho_0_lin}), as $\epsilon$ approaches $0$ or $1$, then  $h(\epsilon_M)\to 0$. Subsequently, the achievable communication cost for the characteristic graph model can be determined as
\begin{align}
\label{ex:general_K_c_with_rho_0_graph}
R_{ach}(G)= K_c\cdot {N^*}\cdot h(\epsilon) \ . 
\end{align}
To understand the behavior of $\eta_{lin}=\frac{N_r}{K_c {N^*}}\cdot \frac{h(\epsilon_M)}{h(\epsilon)}$, knowing that $\frac{N_r}{K_c N^*}$ is a fixed parameter, we need to examine the dynamic component $\frac{h(\epsilon_M)}{h(\epsilon)}$. 
Exploiting Schur concavity\footnote{A real-valued function $f:\mathbb{R}^n \to\mathbb{R}$ is Schur concave if $ f(x_1, x_2, \cdots, x_n) \leq f(y_1, y_2, \cdots, y_n)$ holds whenever $ (x_1, x_2, \cdots, x_n) $ majorizes $(y_1, y_2, \cdots, y_n)$, i.e., $\sum_{i=1}^{k} x_i\geq \sum_{i=1}^{k} y_i$, for all $k\in [n]$ \cite{boland1990some}.} for the binary entropy function, which tells us that $h(\mathbb{E}[X])\geq \mathbb{E}[h(X)]$, 
we can see that as $\epsilon$ approaches $0$ or $1$, then 
\begin{align}
\label{gain_scaling_UB_scenarioIII}
\lim_{\epsilon\to \{0,\,1\} }\frac{h(\epsilon_M)}{h(\epsilon)}\leq M \ , \quad M\in\mathbb{Z}^+\ ,
\end{align}
where the inequality between the left and right-hand sides becomes loose as a function of $M$. 
As a result, as $\epsilon$ approaches $0$ or $1$, then $\eta_{lin}\approx M\cdot\frac{N_r}{K_c \cdot {N^*}}$, which follows from exploiting  (\ref{ex:general_K_c_with_rho_0_lin}), (\ref{ex:general_K_c_with_rho_0_graph}) and the achievability of the upper bound in (\ref{gain_scaling_UB_scenarioIII}). We illustrate the upper bound on $\eta_{lin}$ in Figure~\ref{fig:gain_for_kcgeneral_0corrolation}, and demonstrate the $\eta_{lin}$ behavior for $K_c$ demanded functions across various topologies with circular dataset placement, namely for various $K=N$, i.e., when the amount of circular shift between two consecutive servers is $\Delta=\frac{K}{N}=1$ and the cache size is $M=N-N_r+1$, and for $\rho=0$ and $\epsilon \leq 1/2$. We focus only on plotting $\eta_{lin}$ for $\epsilon \leq 1/2$, accounting for the symmetry of the entropy function. 
Therefore, we only plot for $\epsilon\leq 1/2$. The multiplicative coefficient $\frac{N_r}{K_c N^*}$ of $\eta_{lin}$ determines the growth, which is depicted by the curves. 

\begin{figure}[t!]
\centering
\includegraphics[width=0.48\linewidth]{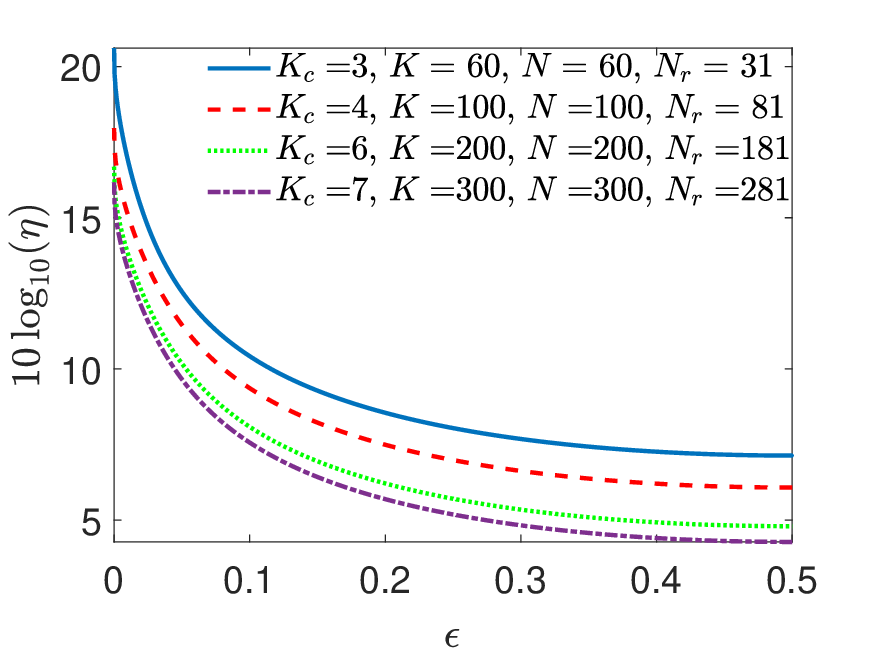}
\vspace{-0.3cm}
\caption{\small{$\eta_{lin}$ in a logarithmic scale versus $\epsilon$ for $K_c$ demanded functions for various values of $K_c$, with $\rho=0$ for different topologies, as detailed in Subsection~\ref{ex:binary_lin_sep} (Scenario~III).}}
\label{fig:gain_for_kcgeneral_0corrolation}
\end{figure}

Thus we see that for a given topology $\mathcal{T}(N, K, K_c, M, N_r)$ with $K_c$ demanded functions, for $\rho=0$, using (\ref{gain_scaling_UB_scenarioIII}), we see that $\eta_{lin}$ exponentially grows with term $1-\epsilon$ for $\epsilon\in[0,1/2]$\footnote{Here we note that the behavior of $\eta_{lin}$ is symmetric around $\epsilon=1/2$.}, and very substantial reduction in the total communication cost is possible as $\epsilon$ approaches $\{0,1\}$ as shown in Figure~\ref{fig:gain_for_kcgeneral_0corrolation} by the blue (solid) curve. The gain over \cite[Theorem 2]{wan2021distributed}, $\eta_{lin}$, for a given topology, changes proportionally to $\frac{N_r}{K_c N^*}$. The gain over \cite{SlepWolf1973}, $\eta_{SW}$, for $\rho=0$ linearly scales\footnote{Incorporating the dataset skew to Proposition~\ref{prop_cyclic_placement_iid_uniform_source-linear_function_graph-rate_UB} (\cite[Theorem 2]{wan2021distributed}), $R_{ach}(lin)$ is simplified to (\ref{ex:general_K_c_with_rho_0_lin}), which from (\ref{gain_scaling_UB_scenarioIII}) can linearly grow in $M=N-N_r+1$ at high skew, explaining the inferior performance of Proposition~\ref{prop_cyclic_placement_iid_uniform_source-linear_function_graph-rate_UB} over \cite{SlepWolf1973} as a function of the skew.} with $\frac{K}{K_c N^*}$. 
For instance, the gain for the blue (solid) curve in Figure~\ref{fig:gain_for_kcgeneral_0corrolation} is $\eta_{SW}=10$.

In general, other functions in $\mathbb{F}_2$, such as bitwise $AND$ and the multi-linear function (see e.g., Proposition~\ref{prop:multi_shot_multilinear_function}), are more skewed and have lower entropies than linearly separable functions, and hence are easier to compute. Therefore, the cost given in (\ref{ex:general_K_c_with_rho_0_graph}) can serve as an upper bound for the communication costs of those more skewed functions in $\mathbb{F}_2$.

We have here provided insights into the achievable gains in communication cost for several scenarios.   
We leave the study of $\eta_{lin}$ for more general topologies $\mathcal{T}(N, K, K_c, M, N_r)$ and correlation models beyond (\ref{ex:corr_model_probability}) devised for linearly separable functions, and beyond the joint PMF model in Table~\ref{tab:joint_pmf}, as future work.

Proposition~\ref{prop:multi_shot_multilinear_function} illustrates the power of the characteristic graph approach in decreasing the communication cost for distributed computing of multi-linear functions, given as in (\ref{multilinear_function}), compared to recovering the local computations $\prod_{k\in \mathcal{S}\,:\, |\mathcal{S}|=M} W_k$ using \cite{SlepWolf1973}. 
We denote by $\eta_{SW}$ the gain of the sum-rate for the graph entropy-based approach given in (\ref{sum_rate_multishot_multilinear}) --- using the conditional entropy-based sum-rate expression in (\ref{sum_rate_cond_graph_entropies}) --- over the sum-rate of the fully distributed scheme of Slepian-Wolf \cite{SlepWolf1973}  
for computing (\ref{multilinear_function}). 
For the proposed setting, we next showcase the achievable gains $\eta_{SW}$ of Proposition~\ref{prop:multi_shot_multilinear_function} via an example and showcase the results via Figure~\ref{fig:gains_computing_bilinearfunctions}.

\subsection{Distributed computation of $K$-multi-linear functions over $\mathbb{F}_2$}
\label{ex_prop:multi_shot_multilinear_function}

We study the behaviors of $\eta_{SW}$ 
versus the skewness parameter $\epsilon$ for computing the multi-linear function given in  (\ref{multilinear_function}) for i.i.d. uniform $W_k\sim{\rm Bern}(\epsilon)$, $\epsilon\in[0,1/2]$ across $k\in[K]$, and for a given $\mathcal{T}(N, K, K_c, M, N_r)$ with parameters $N$, $K$, $M=\Delta(N-N_r+1)$, such that $N_r=N-1$, $K_c=1$, $\rho=0$, and the number of replicates per dataset is $\frac{MN}{K}=2$. We use Proposition~\ref{prop:multi_shot_multilinear_function} to determine the sum-rate upper bound, and illustrate the gains $10\log_{10}(\eta_{SW})$ in decibels versus $\epsilon$ in Figure~\ref{fig:gains_computing_bilinearfunctions}. 
\begin{figure*}[t!]
\centering
\includegraphics[width=0.48\textwidth]{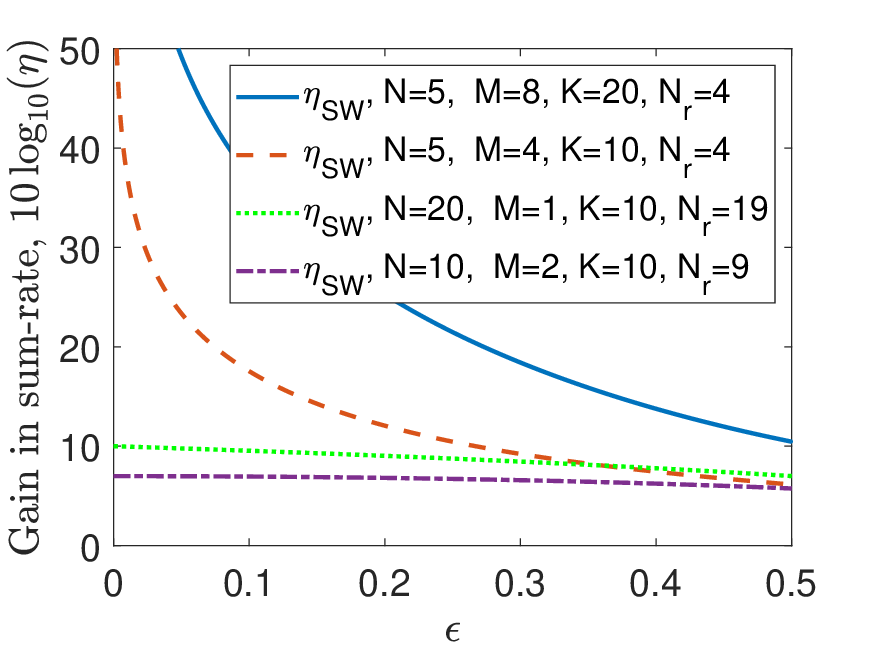}
\includegraphics[width=0.48\textwidth]{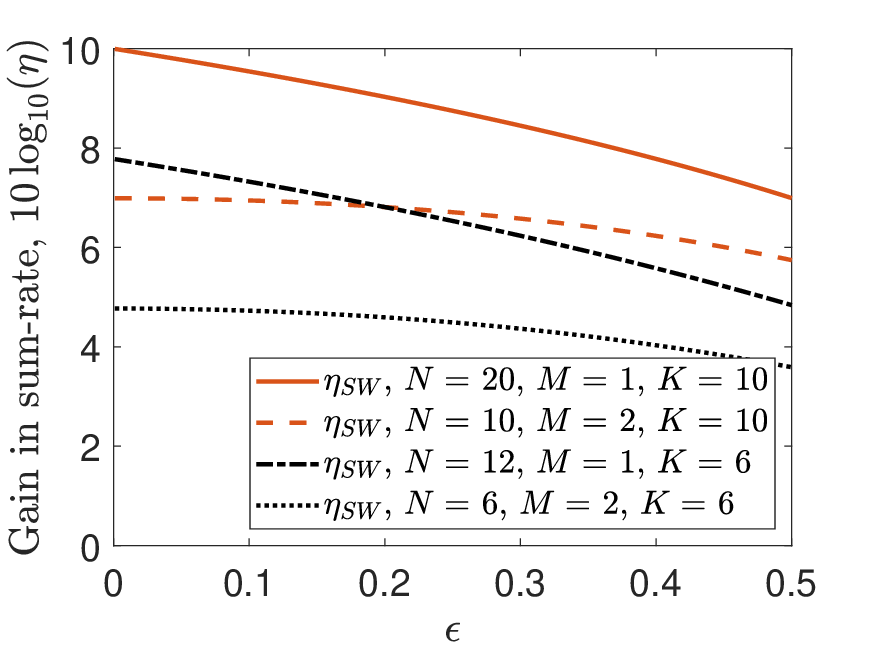}
\vspace{-0.3cm}
\caption{\small{Gain $10\log_{10}(\eta_{SW})$  versus $\epsilon$ for computing (\ref{multilinear_function}), where $K_c=1$, $\rho=0$, $N_r=N-1$. (Left) The set of parameters $N$, $K$, and $M$ are indicated for each configuration. (Right) $10\log_{10}(\eta_{SW})$ versus $\epsilon$ to observe the effect of $N$ for a fixed total cache size $MN$ and fixed $K$.}}\label{fig:gains_computing_bilinearfunctions}
\end{figure*}

From the numerical results in Figure~\ref{fig:gains_computing_bilinearfunctions} (Left), we observe that the sum-rate gain of the graph entropy-based approach versus the fully distributed approach of \cite{SlepWolf1973}, $\eta_{SW}$, could reach up to more than $10$-fold gain in compression rate for uniform and up to $10^6$-fold for skewed data. 
The results on $\eta_{SW}$ showcase that our proposed scheme can guarantee an exponential rate reduction over \cite{SlepWolf1973} as a function of decreasing $\epsilon$. Furthermore, the sum-rate gains scale linearly with the cache size $M$, which scales with $K$ given $N_r=N-1$. 
Note that $\eta_{SW}$ diminishes with increasing $N$ when $M$ and $\Delta$ are kept fixed.
In Figure~\ref{fig:gains_computing_bilinearfunctions} (Right), for $M\ll K$, a fixed total cache size $MN$, and hence fixed $K$, the gain $\eta_{SW}$ for large $N$ and small $M$ is higher versus small $N$ and large $M$, demonstrating the power of the graph-based approach as the topology becomes more and more distributed. 
%

\section{Conclusion}
\label{sec:conclusion}

In this paper, we devised a distributed computation framework for general function classes in multi-server, multi-function, single-user topologies. 
Specifically, we analyzed the upper bounds for the communication cost for computing in such topologies, exploiting K\"orner's characteristic graph entropy, by incorporating the structures in the dataset and functions, as well as the dataset correlations. 
To showcase the achievable gains of our framework, and perceive the roles of dataset statistics, correlations, and function classes, we performed several experiments under cyclic dataset placement over a field of characteristic two. 
Our numerical evaluations for distributed computing of linearly separable functions, as demonstrated 
in Subsection~\ref{ex:binary_lin_sep} via three scenarios, indicate that by incorporating dataset correlations and skew, it is possible to achieve a very substantial reduction 
in the total communication cost 
over the state-of-the-art.
Similarly, for distributed computing of multi-linear functions, 
in Subsection~\ref{ex_prop:multi_shot_multilinear_function}, we demonstrate 
a very substantial reduction in the total communication cost 
versus the state-of-the-art. 
Our main results (Theorem~\ref{theo_cyclic_placement_general_source-general_function_graph-rate_UB} and Propositions~\ref{prop_cyclic_placement_iid_uniform_source-linear_function_graph-rate_UB}, \ref{prop:general_placement__correlated_Boolean_function}, and \ref{prop:multi_shot_multilinear_function}) and observations through the examples help us gain insights into reducing the communication cost of distributed computation by taking into account the structures of datasets (skew and correlations) and functions (characteristic graphs).

Potential directions include providing a tighter achievability result for Theorem~\ref{theo_cyclic_placement_general_source-general_function_graph-rate_UB} and devising a converse bound on the sum-rate. They involve conducting experiments under the scheme of the coded scheme of Maddah-Ali and Niesen detailed in \cite{MaddahAli2013Journal} in order to capture the finer-grained granularity of placement that can help tighten the achievable rates. 
They also involve, beyond the special cases detailed in Propositions~\ref{prop_cyclic_placement_iid_uniform_source-linear_function_graph-rate_UB}, \ref{prop:general_placement__correlated_Boolean_function}, and \ref{prop:multi_shot_multilinear_function}, exploring the achievable gains for a broader set of distributed computation scenarios, e.g., over-the-air computing, cluster computing, coded computing, distributed gradient descent, or more generally, distributed optimization and learning, and goal-oriented and semantic communication frameworks, that can be reinforced by compression by capturing the skewness, correlations, and placement of datasets, the structures of functions, and topology.



\section*{Acknowledgments}
The authors thank Prof. Kai Wan at the Huazhong University of Science and Technology, Wuhan, China for interesting discussions.

\appendix



\section{Technical Preliminary}
\label{sec:Preliminary}

Here, we detail the notion of characteristic graphs and their entropy in the context of source compression. 
We recall that the below proofs use the notation given in Section~\ref{sec:org}.

\subsection{Distributed source compression, and communication cost}
\label{subsec:Preliminary-source}
Given statistically dependent, finite-alphabet, i.i.d. random sequences ${\bf X}_1,\,{\bf X}_2,\dots, {\bf X}_N$ where ${\bf X}_i\in\mathbb{F}_q^{|\mathcal{Z}_i|\times n}$ for $i\in\Omega$, the Slepian-Wolf theorem gives a theoretical lower bound for the lossless coding rate of distributed servers, in the limit as $n$ goes to infinity. Denoting by $R_i$ the encoding rate of server $i\in \Omega$, the sum-rate (or communication cost) for distributed source compression is given by 
\begin{align}
\label{communication_cost_SW}
\sum\limits_{i\in \mathcal{S}}R_i\geq H( X_{\mathcal{S}}\, \vert\, X_{\mathcal{S}^c}),\quad \mathcal{S}\subseteq \Omega\ ,
\end{align}
where $\mathcal{S}$ denotes the indices of a subset of servers, $\mathcal{S}^c=\Omega\backslash \mathcal{S}$ its complement, and $X_{\mathcal{S}}=\{X_i,\, i\in \mathcal{S}\}$.

We recall that in the case of distributed source compression, given by the coding theorem of Slepian-Wolf \cite{SlepWolf1973}, the encoder mappings specify the bin indices for the server sequences ${\bf X}_i$. 
The bin index is such that every bin of each n-vector ${\bf X}_i$ of server $i\in\Omega$ is randomly drawn under the uniform distribution across the set $\{0,1,\dots, 2^{nR_i}-1\}$ of $2^{nR_i}$ bins. Transmission of server $i\in\Omega$ is $e_{X_i}({\bf X}_i)$, where $e_{X_i}: {\bf X}_i\to \{0,1,\dots, 2^{nR_i}-1\}$ is the encoding function of $i\in\Omega$ onto bins. 
The total number of symbols in $e_{X_i}({\bf X}_i)$ is $T_i=H(e_{X_i}({\bf X}_i))$. This value corresponds to the aggregate number of symbols in the transmitted subfunctions from the server. 
Hence, the communication cost (rate) of $i\in\Omega$, for a sufficiently large $n$, satisfies
\begin{align}
\label{communication_cost_source_compression}
R_i=\frac{T_i}{L}=\frac{H(e_{X_i}({\bf X}_i))}{n}\geq H(X_i) \ ,
\end{align}
where the cost can be further reduced via a more efficient mapping $e_{X_i}({\bf X}_i)$ if $W_k$, $k\in[K]$ are correlated.

\subsection{Characteristic graphs, distributed functional compression, and communication cost}
\label{subsec:Preliminary-graph}

In this section, we provide a summary of key graph theoretic points devised by  K{\"o}rner \cite{Korner1973}, and studied by Alon and Orlitsky \cite{AlonOrlit1996}, Orlitsky and Roche \cite{OrlRoc2001} to understand the fundamental limits of distributed computation.  

Let us consider the canonical scenario with two servers, storing $X_1$ and $X_2$, respectively. The user requests a bivariate function $F(X_1,\,X_2)$ that could be linearly separable or in general non-linear. Associated with the source pair $(X_1, X_2)$ is a characteristic graph $G$, as defined by Witsenhausen \cite{witsenhausen1976}.   
We denote by $G_{X_1}=(V_{G_{X_1}},\,E_{G_{X_1}})$ the characteristic graph server one builds (server two similarly builds $G_{X_2}$) for computing\footnote{We detail the compression problem for the simultaneous computation of a set of requested functions in Appendix~\ref{app-graph-encode}.} $F(X_1,\,X_2)$, determined as a function of $X_1$, $X_2$, and $F$, where $V_{G_{X_1}}=\mathcal{X}_1$ and an edge $(x_1^1, x_1^2)\in E_{G_{X_1}}$ if and only if there exists a $x_2^1 \in \mathcal{X}_2$ such that $P_{X_1,X_2}(x^1_1, x^1_2)\cdot P_{X_1,X_2}(x^2_1, x^1_2) > 0 $ and $F(x^1_1, x^1_2)\neq F(x^2_1, x^1_2)$. 
Note that the idea of building $G_{X_1}$ can also be generalized to multivariate functions, $F(X_{\Omega})$ where $\Omega=[N]$ for $N>2$ \cite{FeiMed2014}. 
In this paper, we only consider vertex colorings. A valid coloring of a graph $G_{X_1}$ is such that each vertex of $G_{X_1}$ is assigned a color (code) such that adjacent vertices receive disjoint colors (codes). Vertices that are not connected can be assigned to the same or different colors. 
The chromatic number $\chi(G_{X_1})$ of a graph $G_{X_1}$ is the minimum number of colors needed to have a valid coloring of $G_{X_1}$ \cite{malak2022fractional,malak2023weighted,salehi2023achievable}.

\begin{defi}{(Characteristic graph entropy \cite{Korner1973,AlonOrlit1996}.)}
Given a random variable $X_1$ with characteristic graph $G_{X_1} = (V_{X_1} , E_{X_1} )$ for computing function $f(X_1, X_2)$, the entropy of the characteristic graph is expressed as 
\begin{align}
\label{graph_entropy}
H_{G_{X_1}}(X_1)= \min\limits_{X_1\in U_1\in S(G_{X_1})} I(X_1; U_1)\ ,
\end{align}
where $S(G_{X_1})$ is the set of all MISs of $G_{X_1}$, where an MIS is not a subset of any other independent set, where an independent set of a graph is a set of its vertices in which no two vertices are adjacent \cite{moon1965cliques}. 
Notation $X_1 \in U_1 \in S(G_{X_1})$ means that the minimization is over all distributions $P_{U_1,X_1}(u_1, x_1)$ such that $P_{U_1,X_1}(u_1,x_1) > 0$ implies $x_1 \in u_1$, where $U_1$ is an MIS of $G_{x_1}$.

Similarly, the conditional graph entropy for $X_1$ with characteristic graph $G_{X_1}$ for computing $f(X_1, X_2)$ given $X_2$ as side information is defined in \cite{OrlRoc2001} using the notation $U_1-X_1-X_2$ that indicates a Markov chain:
\begin{align}
\label{conditional_graph_entropy}
H_{G_{X_1}}(X_1\,|\,X_2)= \min\limits_{\overset{X_1\in U_1\in S(G_{X_1})}{U_1-X_1-X_2}} I(X_1; U_1\,|\,X_2)\ .
\end{align}
\end{defi}

The Markov chain relation in (\ref{conditional_graph_entropy}) implies that $H_{G_{X_1}}(X_1\,|\,X_2)\leq H_{G_{X_1}}(X_1)$ \cite[Ch. 2]{CovTho1991}. 
In (\ref{conditional_graph_entropy}), the goal is to determine the equivalence classes $U_1$ of $x_1^i\in X_1$ that have the same function outcome $\forall x_2^1\in \mathcal{X}_2$ such that $P_{X_1,X_2}(x_1^i,x_2^1)>0$. 
We next consider an example to clarify the distinction between characteristic graph entropy, $H_{G_{X_1}}(X_1)$ and entropy of a conditional characteristic graph, or conditional graph entropy, $H_{G_{X_1}}(X_1\,\vert\, X_2)$.

\begin{ex}(Characteristic graph entropy of ternary random variables \cite[Examples 1-2]{OrlRoc2001}.) In this example, we first investigate the characteristic graph entropy $H_{G_{X_1}}(X_1)$ and the conditional graph entropy $H_{G_{X_1}}(X_1\,|\,X_2)$.

\begin{enumerate}%
\item  
Let $P_{X_1}$ be a uniform PMF over the set $\{1,2,3\}$. Assume that $G_X$ has only one edge, i.e., $E_{X_1}=\{(1,3)\}$. Hence, the set of MISs is given as $S(G_{X_1})=\{\{1,2\},\{2,3\}\}$.

To determine the entropy of a characteristic graph, i.e.,  $H_{G_{X_1}}(X_1)$, from (\ref{graph_entropy}), our objective is to minimize $I(X_1;U_1)$, which is a convex function of $\mathbb{P}(U_1 \,\vert\, X_1)$. Hence, $I(X_1;U_1)$ is minimized when the conditional distribution of $\mathbb{P}(U_1\,\vert\, X_1)$ is selected as $\mathbb{P}(U_1=\{1,2\} \,\vert\, X_1=1)=1$, $\mathbb{P}(U_1=\{2,3\} \,\vert\, X_1=3)=1$, and~$\mathbb{P}(U_1=\{1,2\} \,\vert\, X_1=2)=\mathbb{P}(U_1=\{2,3\} \,\vert\, X_1=2)=1/2$. As a result of this PMF, we have 
\begin{align}
H_{G_{X_1}}(X_1)=H(U_1)-H(U_1\,\vert\,X_1)=1-\frac{1}{3}=\frac{2}{3} \ .
\end{align}
\item Let $P_{X_1,X_2}$ be a uniform PMF over the set $\{(x_1,x_2): \, x_1,\, x_2\in\{1,2,3\},\, x_1\neq x_2\}$ and $E_{X_1}=\{(1,3)\}$. Note that $H(X_1|X_2)=1$ given the joint PMF. 
To determine the conditional characteristic graph entropy, i.e., $H_{G_{X_1}}(X_1\,\vert\, X_2)$, using (\ref{conditional_graph_entropy}), our objective is to minimize $I(X_1; U_1|X_2)$, which is convex in $\mathbb{P}(U_1\,\vert\, X_1)$. Hence, $I(X_1; U_1\,|\,X_2)$ %
is minimized when %
$\mathbb{P}(U_1\,\vert\, X_1)$ is selected as $\mathbb{P}(U_1=\{1,2\} \,\vert\, X_1=1)=\mathbb{P}(U_1=\{2,3\} \,\vert\, X_1=3)=1$, and $\mathbb{P}(U_1=\{1,2\}\,\vert\, X_1=2)=\mathbb{P}(U_1=\{2,3\}\,\vert\, X_1=2)=1/2$. Hence, 
we obtain 
\begin{align}
H(U_1\,\vert\, X_2)&=\frac{1}{3} H(U_1\,\vert\, X_1\in\{2,3\})+\frac{1}{3}H(U_1\,\vert\, X_1\in\{1,3\})+\frac{1}{3}H(U_1\,\vert\, X_1\in\{1,2\})\nonumber\\
&=\frac{1}{3}h\Big(\frac{1}{4}\Big)+\frac{1}{3}+\frac{1}{3}h\Big(\frac{1}{4}\Big) \ ,
\end{align}
which yields, using $U_1-X_1-X_2$, that 
\begin{align}
H_{G_{X_1}}(X_1\,\vert\,X_2)&=H(U_1\,\vert\, X_2)-H(U_1\,\vert\, X_1,\,X_2)=H(U_1\,\vert\, X_2)-H(U_1\,\vert\, X_1)\nonumber\\
&=\Big[\frac{1}{3}h\Big(\frac{1}{4}\Big)+\frac{1}{3}+\frac{1}{3}h\Big(\frac{1}{4}\Big)\Big]-\frac{1}{3}=\frac{2}{3}h\Big(\frac{1}{4}\Big) \ .
\end{align}
\end{enumerate}

\end{ex}

\begin{defi}\label{chromatic_entropy}
(Chromatic entropy  \cite{AlonOrlit1996}.) The chromatic entropy of a graph $G_{X_1}$
is defined as
\begin{align}
 H^{\chi}_{G_{X_1}}(X_1) = \min\limits_{c_{G_{X_1}}} H(c_{G_{X_1}}(X_1)),   
\end{align}
where the minimization is over the set of colorings such that $c_{G_{X_1}}$ is a valid coloring of $G_{X_1}$.
\end{defi}

Let $G^n_{{\bf X}_1}=(V_{X_1}^n , E_{X_1}^n )$ be the $n$-th OR power of a graph $G_{X_1}$ for the source sequence ${\bf X}_1$ to compress $F({\bf X}_1,\,{\bf X}_2)$. In this OR power graph, $V_{X_1}^n = \mathcal{X}_1^n$ and $({\bf x}_1^1,{\bf x}_1^2) \in E_{X_1}^n$, where ${\bf x}_1^1=(x_{11}^1,x_{12}^1,\dots,x_{1n}^1)$ and similarly for ${\bf x}_1^2$, when there exists at least one coordinate $l\in[n]$ such that $(x_{1l}^1 ,x_{1l}^2 ) \in E_{X_1}$. We denote a coloring of $G^n_{{\bf X}_1}$ by $c_{{G^n_{{\bf X}_1}}}({\bf X}_1)$. 
The encoding function at server one is a mapping from ${\bf X}_1$ to the colors $c_{G^n_{{\bf X}_1}}({\bf X}_1)$ of the characteristic graph $G^n_{{\bf X}_1}$ for computing $F({\bf X}_1,\,{\bf X}_2)$.  
In other words, $c_{G^n_{{\bf X}_1}}({\bf X}_1)$ specifies the color classes of ${\bf X}_1$ such that each color class forms an independent set that induces the same function outcome.

Using Definion~\ref{chromatic_entropy}, we can determine the chromatic entropy of graph $G^n_{{\bf X}_1}$ as
\begin{align}
\label{chromatic_entropy_expression}
H^{\chi}_{G^n_{{\bf X}_1}}({\bf X}_1)= \min\limits_{c_{{G^n_{{\bf X}_1}}}} H(c_{{G^n_{{\bf X}_1}}}({\bf X}_1)) \ .
\end{align}

In \cite{Korner1973}, K\"orner has shown the relation between the chromatic and graph entropies, which we detail next.
\begin{theo}{\bf (Chromatic entropy versus graph entropy \cite{Korner1973}.)} The following relation holds between the characteristic graph entropy and the chromatic entropy of graph $G^n_{{\bf X}_1}$ in the limit of large $n$:
\begin{align}
\label{Korners_graph_entropy}
H_{G_{X_1}}(X_1)&=\lim\limits_{n\to\infty} \frac{1}{n}H^{\chi}_{G^n_{{\bf X}_1}}({\bf X}_1)\ .
\end{align}
\end{theo}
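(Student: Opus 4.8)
The plan is to establish the equality in (\ref{Korners_graph_entropy}) by proving two matching inequalities: the converse $\liminf_{n\to\infty}\frac1n H^{\chi}_{G^n_{{\bf X}_1}}({\bf X}_1)\ge H_{G_{X_1}}(X_1)$ and the achievability $\limsup_{n\to\infty}\frac1n H^{\chi}_{G^n_{{\bf X}_1}}({\bf X}_1)\le H_{G_{X_1}}(X_1)$. Throughout I would exploit the product structure of the OR power: a set $S$ is independent in $G^n_{{\bf X}_1}$ if and only if, for every coordinate $l\in[n]$, its projection $\{x_{1l}:{\bf x}_1\in S\}$ is an independent set of $G_{X_1}$; equivalently, every maximal independent set of $G^n_{{\bf X}_1}$ is a product $\prod_{l}U_{1l}$ of MISs of $G_{X_1}$. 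I would also record the standard fact that the minimization in (\ref{graph_entropy}) over $S(G_{X_1})$ coincides with the minimization over all independent sets, since any independent set $U_1$ can be deterministically mapped to an MIS $U_1'\supseteq U_1$ containing $X_1$, and the Markov relation $X_1-U_1-U_1'$ gives $I(X_1;U_1')\le I(X_1;U_1)$.

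For the converse, fix any valid coloring $c$ of $G^n_{{\bf X}_1}$ and let $W=c({\bf X}_1)$ denote the color class, an independent set of $G^n_{{\bf X}_1}$ containing ${\bf X}_1$. Since $c$ is a deterministic function of ${\bf X}_1$, we have $H(c({\bf X}_1))=I({\bf X}_1;W)$. Writing $W_l$ for the coordinate-$l$ projection of $W$ (an independent set of $G_{X_1}$ containing $X_{1l}$) and $X_1^{l-1}=(X_{11},\dots,X_{1(l-1)})$, the chain rule gives $I({\bf X}_1;W)=\sum_{l}I(X_{1l};W\mid X_1^{l-1})$. Because $W_l$ is a function of $W$, data processing conditioned on $X_1^{l-1}$ bounds each term below by $I(X_{1l};W_l\mid X_1^{l-1})$; the i.i.d. assumption and ``conditioning reduces entropy'' then give $I(X_{1l};W_l\mid X_1^{l-1})\ge I(X_{1l};W_l)\ge H_{G_{X_1}}(X_1)$, the last step using the independent-set form of (\ref{graph_entropy}). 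Summing over $l$ yields $H(c({\bf X}_1))\ge n\,H_{G_{X_1}}(X_1)$ for every coloring, hence $\frac1n H^{\chi}_{G^n_{{\bf X}_1}}({\bf X}_1)\ge H_{G_{X_1}}(X_1)$. This single-letterization sidesteps having to prove additivity of graph entropy as a separate lemma.

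For achievability, let $P_{U_1\mid X_1}$ attain the minimum in (\ref{graph_entropy}), so that $I(X_1;U_1)=H_{G_{X_1}}(X_1)$ and $P_{U_1\mid X_1}(u_1\mid x_1)>0$ implies $x_1\in u_1$. I would draw $M=\lceil 2^{n(I(X_1;U_1)+\epsilon)}\rceil$ sequences ${\bf u}_1(1),\dots,{\bf u}_1(M)$ i.i.d. from the product marginal of $U_1$; each is a product MIS of $G^n_{{\bf X}_1}$. A type-covering argument shows that for large $n$ this list jointly-typically covers all but a vanishing fraction of the strongly typical ${\bf x}_1$ with high probability, so some deterministic list does too. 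Strong (robust) typicality is essential here: since $P_{X_1U_1}$ is supported on $\{(x_1,u_1):x_1\in u_1\}$, a strongly jointly typical pair $({\bf x}_1,{\bf u}_1(m))$ satisfies $x_{1l}\in u_{1l}(m)$ in \emph{every} coordinate, so ${\bf x}_1\in{\bf u}_1(m)$ and assigning ${\bf x}_1$ the color $m$ is valid. Coloring each covered ${\bf x}_1$ by the least index of a covering MIS, and giving each uncovered or atypical sequence a fresh color, produces a valid coloring whose palette has size $M+o(2^{nH_{G_{X_1}}(X_1)})$; hence $H(c({\bf X}_1))\le \log M+o(n)\le n(H_{G_{X_1}}(X_1)+\epsilon)+o(n)$. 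Letting $n\to\infty$ and then $\epsilon\to0$ gives the matching upper bound.

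I expect the achievability covering step to be the main obstacle: one must simultaneously keep the number of colors near $2^{nI(X_1;U_1)}$ (so that the coloring entropy, bounded by the log of the palette size, stays close to $nH_{G_{X_1}}(X_1)$) and guarantee that every color class is genuinely independent in $G^n_{{\bf X}_1}$. The tension is that ordinary joint typicality only forces zero-probability pairs $(x_1,u_1)$ with $x_1\notin u_1$ to be rare rather than absent; the remedy is to work with strong typicality, under which such coordinates cannot occur, so validity of the induced coloring is preserved while the rate estimate is unaffected. The remaining bookkeeping — the negligible contribution of atypical sequences to the coloring entropy, and the equivalence of the MIS and independent-set minimizations — is routine.
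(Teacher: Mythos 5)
Your proposal is essentially correct, but there is nothing in the paper to compare it against: the paper states this result as an imported theorem, attributing it to K{\"o}rner's 1973 work, and gives no proof of its own (the surrounding appendix only uses the statement, together with (\ref{chromatic_entropy_expression}), to justify the achievability of the rate in (\ref{communication_cost_functional_compression})). What you have written is a sound reconstruction of the classical argument. The converse via single-letterization is airtight: the color class $W$ is a deterministic function of ${\bf X}_1$, its coordinate projections $W_l$ are independent sets of $G_{X_1}$ containing $X_{1l}$, and the step $I(X_{1l};W_l\mid X_1^{l-1})\ge I(X_{1l};W_l)$ is legitimate precisely because $X_{1l}\independent X_1^{l-1}$ under the i.i.d.\ assumption; your reduction of the MIS minimization to the independent-set minimization is also correct. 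On the achievability side, the covering construction with strong typicality is the right mechanism, and your observation that strong typicality forces $x_{1l}\in u_{1l}(m)$ in every coordinate (so each color class sits inside a product independent set) is exactly the point that makes the coloring valid. The one imprecision is the claim that the palette has size $M+o(2^{nH_{G_{X_1}}(X_1)})$ and hence $H(c({\bf X}_1))\le\log M+o(n)$: the atypical and uncovered sequences, each given a fresh color, can number up to $|\mathcal{X}_1|^n$, which need not be $o(2^{nH_{G_{X_1}}(X_1)})$, so one cannot bound the coloring entropy by the logarithm of the palette size. The correct (and standard) repair, which you already gesture at in your closing remarks, is to split $H(c({\bf X}_1))$ on the indicator of the covered-and-typical event and use the fact that the complementary event has exponentially small probability, so its contribution is $o(n)$. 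With that bookkeeping made explicit, the proof is complete.
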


Similarly, from (\ref{chromatic_entropy_expression}) and (\ref{Korners_graph_entropy}), the conditional graph entropy of $X_1$ given $X_2$ is given as
\begin{align}
H_{G_{X_1}}(X_1\, \vert\, X_2)=\lim\limits_{n\to\infty} \,\,\min\limits_{c_{G^n_{{\bf X}_1}},\, c_{G^n_{{\bf X}_2}}} \frac{1}{n} H(c_{G^n_{{\bf X}_1}}({\bf X}_1)\, \vert \, c_{G^n_{{\bf X}_2}}({\bf X}_2)) \ .   
\end{align}

\subsubsection{\bf A characteristic graph-based encoding framework for simultaneously computing a set of functions}
\label{app-graph-encode}

The user demands a set of functions $\{F_j(X_{\Omega})\}_{j\in[K_c]}\in\mathbb{R}^{K_c}$ that are possibly non-linear in the subfunctions.  
In our proposed framework, for the distributed computing of these functions, we leverage characteristic graphs that can capture the structure of subfunctions. To determine the achievable rate of distributed lossless functional compression, we determine the colorings of these graphs and evaluate the entropy of such colorings. In the case of $K_c>1$ functions, let $G_{X_i,j}=(V_{X_i},E_{X_i,j})$ be the characteristic graph that server $i\in\Omega$ builds for computing  function $j\in[K_c]$. The graphs $\{G_{X_i,j}\}_{j\in[K_c]}$ are on the same vertex set.  

Union graphs for simultaneously computing a set of functions with side information have been considered in \cite{FeiMed2014}, using {\emph{multi-functional characteristic graphs}}. A multi-functional characteristic graph is an OR function of individual characteristic graphs for different functions \cite[Definition 45]{FeiMed2014}. 
To that end, server $i\in\Omega$ creates a union of graphs on the same set of vertices $V_{X_i}$ with a set of edges $E^{\cup}_{X_i}$, which satisfy 
\begin{align}
\label{union_graph}
G^{\cup}_{X_i}=\bigcup_{j\in[K_c]} G_{X_i,j}=(V_{X_i},E^{\cup}_{X_i}) \ ,\quad E^{\cup}_{X_i}=\bigcup_{j\in[K_c]} E_{X_i,j} \ .
\end{align}
In other words, we need to distinguish the outcomes $x_i^1$ and $x_i^2$ of server $X_i$ if there exists at least one function $F_j(x_{\Omega}),\,j\in [K_c]$ out of $K_c$ functions such that $F_j(x_i^1,x_{\Omega\backslash_i}^1)\neq F_j(x_i^2,x_{\Omega\backslash_i}^1)$, for some $P_{X_{\Omega}}(x_i^1,x_{\Omega\backslash_i}^1)\cdot P_{X_{\Omega}}(x_i^2,x_{\Omega\backslash_i}^1)>0$ given $x_{\Omega\backslash_i}^1\in X_{\Omega\backslash i}$. 
The server then compresses the union $G^{\cup}_{X_i}$ by exploiting (\ref{chromatic_entropy_expression}) and (\ref{Korners_graph_entropy}). 

In the special case when the number of demanded functions $K_c$ is large 
(or tends to infinity), such that the union of all subspaces spanned by the independent sets of each $G_{X_i,j}$, $j\in [K_c]$ is the same as the subspace spanned by $\mathcal{X}_i$, MISs of $G^{\cup}_{X_i}$ 
in (\ref{union_graph}) for server $i\in\Omega$ become singletons, rendering $G^{\cup}_{X_i}$ a complete graph. In this case, the problem boils down to the paradigm of {\emph{distributed source compression}} (see Appendix~\ref{subsec:Preliminary-source}).

\subsubsection{\bf Distributed functional compression}
\label{app-prelim-Distributed functional compression}

The fundamental limit of functional compression has been given by K\"orner \cite{Korner1973}. 
Given ${\bf X}_i\in \mathbb{F}_q^{|\mathcal{Z}_i|\times n}$ for server $i\in\Omega$, the encoding function $e_{X_i}$ specifies MISs given by the valid colorings $c_{G^n_{{\bf X}_i}}({\bf X}_i)$. 
Let the number of symbols in ${\bf Z}_i=g_i({\bf X}_i)=e_{X_i}(c_{G^n_{{\bf X}_i}}({\bf X}_i))$ be $T_i$ for server $i\in\Omega$. Hence, the communication cost of server $i$, as $n\to\infty$, is given by  (\ref{communication_cost_functional_compression}). 

Defining $G_{X_{\mathcal{S}}}=[G_{X_i}]_{i\in\mathcal{S}}$ for a given subset $\mathcal{S}\subseteq\Omega$ chosen to guarantee distributed computation of $F(X_{\Omega})$, i.e., $|\mathcal{S}|\geq N_r$, the sum-rate of servers for distributed lossless functional compression for computing $F({\bf X}_{\Omega})=\{F(X_{1l},\,X_{2l},\dots, X_{Nl})\}_{l=1}^n$ equals
\begin{align}
\label{communication_cost_Korner}
R_{\rm ach}=\sum\limits_{i\in \mathcal{S}}R_i\geq H_{G_{X_{\mathcal{S}}}}( X_{\mathcal{S}}\, \vert\, Z_{\mathcal{S}^c}),\quad \mathcal{S}\subseteq \Omega\ ,
\end{align}
where $H_{G_{X_{\mathcal{S}}}}(X_{\mathcal{S}})$ is the {\emph{joint graph entropy}} of $\mathcal{S}\subseteq\Omega$, and it is defined as \cite[Definition~30]{FeiMed2014}:
\begin{align}
\label{graph_entropy_general}
H_{G_{X_{\mathcal{S}}}}(X_{\mathcal{S}})=\lim\limits_{n\to\infty} \,\,\min\limits_{\big\{c_{G^n_{{\bf X}_i}}\big\}_{i\in\mathcal{S}}} \frac{1}{n} H(c_{G^n_{{\bf X}_i}}({\bf X}_i),\, i\in\mathcal{S})\ , 
\end{align}
where $c_{G^n_{{\bf X}_i}}({\bf X}_i)$ is the coloring of the $n$-th power graph $G^n_{{\bf X}_i}$ that $i\in\Omega$ builds for computing $f({\bf X}_{\Omega})$ \cite{FeiMed2014}. 

Similarly, exploiting \cite[Definition~31]{FeiMed2014}, the {\emph{conditional graph entropy}} of the servers is given as%
\begin{align}
\label{conditional_graph_entropy_general} 
H_{G_{X_{\mathcal{S}}}}(X_{\mathcal{S}}\, \vert\, Z_{\mathcal{S}^c})=\lim\limits_{n\to\infty} \,\,\min\limits_{\big\{c_{G^n_{{\bf X}_i}}\big\}_{i\in\Omega}} \frac{1}{n} H(c_{G^n_{{\bf X}_i}}({\bf X}_i),\, i\in\mathcal{S}\, \vert \, e_{X_i}(c_{G^n_{{\bf X}_i}}({\bf X}_i)),\, i\in\mathcal{S}^c)\ . 
\end{align}

Using (\ref{union_graph}) we jointly capture the structures of the set of demanded functions. Hence, this enables us to provide a refined communication cost model in (\ref{communication_cost_functional_compression}) versus the characterizations as a function of $K_c$, see e.g., \cite{wan2021distributed,9521491, huang2023fundamental}.

%

\section{Proofs of Main Results}
\label{sec:Proofs}

\subsection{Proof of Theorem~\ref{theo_cyclic_placement_general_source-general_function_graph-rate_UB}}
\label{Proof_theo_cyclic_placement_general_source-general_function_graph-rate_UB}

Consider the general topology, $\mathcal{T}(N, K, K_c, M, N_r)$, under general placement of datasets, and for a set of $K_c$ general functions $\{f_j(W_{\mathcal{K}})\}_{j\in[K_c]}$ requested by the user, and under general jointly distributed dataset models, including non-uniform inputs and allowing correlations across datasets.

We note that server $i\in\Omega$ builds a characteristic graph\footnote{The characteristic graph-based approach is valid  provided that each subfunction $W_k$, $k\in\mathcal{K}$ contained in $X_i
=W_{\mathcal{Z}_i}$ is defined over a $q$-ary field such that  
$q\geq 2$, to ensure that the union graph $G^{\cup}_{X_i}$, $i\in\Omega$ (or $G_{X_i,j}$, $j\in[Kc]$ each) has more than one vertex.} $G_{X_i,j}$ for distributed lossless computing of $f_j(W_{\mathcal{K}})$, $j\in [K_c]$.  
Similarly, server $i\in\Omega$ builds a union characteristic graph for computing $\{f_j(W_{\mathcal{K}})\}_{j\in[K_c]}$. We denote by $G^{\cup}_{X_i}=(V_{X_i},E_{X_i})=\bigcup_{j\in[K_c]} G_{X_i,j}$ the {\em union characteristic graph}, given as in (\ref{union_graph}). In the description of $G^{\cup}_{X_i}$, the set $V_{X_i}$ is the support set of $X_i$, i.e., $V_{X_i}=\mathcal{X}_i$, and $E_{X_i}$ is the union of edges, i.e., $E_{X_i}=\bigcup\nolimits_{j\in[K_c]} E_{X_i,j}$, where $E_{X_i,j}$ denotes the set of edges in $G_{X_i,j}$, which is the characteristic graph the server builds for distributed lossless computing $f_j(W_{\mathcal{K}})$ for a given function $j\in[K_c]$. 

To compute the set of demanded functions $\{f_j(W_{\mathcal{K}})\}_{j\in[K_c]}$, we assume that server $i\in\Omega$ can use a codebook of functions denoted by $\mathcal{C}_i$ such that  $\mathcal{C}_i\ni g_i$, where the user can compute its demanded functions using the set of transmitted information $\{g_i(X_i)\}_{i\in \mathcal{S}}$ provided from any set of $|\mathcal{S}|=N_r$ servers. 
More specifically, server $i\in\Omega$ chooses a function $g_i\in\mathcal{C}_i$ to encode $X_i$. Note that $g_i$ represents, in the context of encoding characteristic graphs, the mapping from $X_i$ to a valid coloring $c_{G_{X_i}}(X_i)$. We denote by ${\bf Z}_i=g_i({\bf X}_i)=e_{X_i}(c_{G^n_{{\bf X}_i}}({\bf X}_i))$ the color encoding performed by server $i\in\Omega$ for the length $n$ realization of $X_i$, denoted by ${\bf X}_i$. For convenience, we use the following shorthand notation to represent the transmitted information from the server: 
\begin{align}
\label{transmitted_info}
Z_i=g_i(X_i) \ , \quad i\in\Omega \ .
\end{align}

Combining the notions of the union graph in (\ref{union_graph}) and the encodings of the individual servers given in (\ref{transmitted_info}), the rate $R_i$ needed from server $i\in\Omega$ for meeting the user demand is upper bounded by the cost of the best encoding that minimizes the rate of information transmission from the respective server. Equivalently, 
\begin{align}
\label{individual_rate_UB}
R_i \geq \min\limits_{Z_i=g_i(X_i)\, : \, g_i\in \mathcal{C}_i} H_{G^{\cup}_{X_i}}(X_i) \ ,
\end{align}
where equality is achievable in (\ref{individual_rate_UB}).  
Because the user can recover the desired functions using any set of $N_r$ servers, the achievable sum rate is upper bounded by 
\begin{align}
\label{eq:general_communication_rate_upper_bound_app}
R_{\rm ach} \leq \sum\limits_{i=1}^{N_r}\, \min\limits_{Z_i=g_i(X_i)\, : \, g_i\in \mathcal{C}_i} H_{G^{\cup}_{X_i}}(X_i) \ .
\end{align}

\subsection{Proof of Proposition~\ref{prop_cyclic_placement_iid_uniform_source-linear_function_graph-rate_UB}}
\label{Proof_prop_cyclic_placement_iid_uniform_source-linear_function_graph-rate_UB}
For the multi-server, multi-function distributed computing architecture, this proposition restricts the demand to be a set of linearly separable functions, given as in (\ref{linearly_separable_functions}). 
Given the recovery threshold $N_r$, it holds that 
\begin{align}
\label{achievable_rate_linear_function_independent_set}
R_{\rm ach}\leq \sum\limits_{i=1}^{N_r}\, \min\limits_{Z_i=g_i(X_i)\, : \, g_i\in \mathcal{C}_i} H_{G^{\cup}_{X_i}}(X_i)&=\sum\limits_{i=1}^{N_r} \, \min\limits_{Z_i\, : \, g_i\in \mathcal{C}_i}\, \min\limits_{X_i\in U_i\in S(G^{\cup}_{X_i})} I(X_i; U_i)\\
\label{mutual_info_definition_indep_set_for_Kc_functions}
&=\sum\limits_{i=1}^{N_r} \, \Big[H(W_{(i-1)\Delta+1}^{(i-1)\Delta+M})-H\Big(W_{(i-1)\Delta+1}^{(i-1)\Delta+M}\, \Big\vert \, Z_i \Big) \Big]\\
\label{M_iid_W_k_variables}
&=\sum\limits_{i=1}^{N_r} \Big[M-\Big(M-H(Z_i)\Big)\Big]
=\sum\limits_{i=1}^{N_r} H(Z_i) \ ,
\end{align}
where in (\ref{achievable_rate_linear_function_independent_set}), we used the identity $H_{G^{\cup}_{X_i}}(X_i)=\min\limits_{X_i\in U_i\in S(G^{\cup}_{X_i})} I(X_i; U_i)$. Furthermore, if the codebook $\mathcal{C}_i$ is restricted to linear combinations of subfunctions, $Z_i$ is given by the following set of linear equations: 
\begin{align}
\label{linear_encoding}
Z_i=g_i(X_i)=\Big\{\sum\limits_{k=(i-1)\Delta+1}^{(i-1)\Delta+M} \alpha_k^{(l)} W_k\ ,\,l\in [K_c]\Big\}\ .
\end{align}
In other words, $Z_i$, $i\in[N_r]$, is a vector-valued function. 
Note that each server contributes to determining the set of linearly separable functions $\{f_j(W_{\mathcal{K}})\ , \ j\in[K_c]\}$ of datasets, given as in (\ref{linearly_separable_functions}), in a distributed manner. 
Hence, each independent set $U_i\in S(G^{\cup}_{X_i})$, with  $S(G^{\cup}_{X_i})$ denoting the set of MISs of $X_i$, of $X_i$ is captured by the linear functions of $\{W_k\}_{k\in [(i-1)\Delta+1:(i-1)\Delta+M]}$, i.e., each $U_i\in  S(G^{\cup}_{X_i})$ is determined by (\ref{linear_encoding}). 
Hence, the user can recover the requested functions by linearly combining the transmissions of the $N_r$ servers:
\begin{align}
\label{linear_combination_transmissions_of_linear_encodings}
f_j(W_{\mathcal{K}})=\sum\limits_{i=1}^{N_r} \beta_{ji} Z_i=\sum\limits_{i=1}^{N_r} \beta_{ji} g_i(X_i)=\sum\limits_{k=1}^{K}\gamma_{jk} W_k\ ,\quad j\in [K_c] \ . 
\end{align}
In (\ref{mutual_info_definition_indep_set_for_Kc_functions}), we use the definition of mutual information, $I(X_i; U_i)=H(X_i)-H(X_i\,\vert\, U_i)$, where given $i\in [N_r]$ and $\Delta=\frac{K}{N}$, it holds under cyclic placement that 
\begin{align}
X_i=W_{(i-1)\Delta+1}^{(i-1)\Delta+M}=W_{(i-1)\Delta+1},W_{(i-1)\Delta+2},\dots,W_{(i-1)\Delta+M} \ ,
\end{align}
and
$\alpha_k^{(l)}$ are the 
coefficients for computing function $l\in[K_c]$. 
In (\ref{M_iid_W_k_variables}), we used that $W_k$ is uniform over $\mathbb{F}_q$ and i.i.d. across $k\in [K]$, and rewrote the conditional entropy expression such that
\begin{align}
H\Big(W_{(i-1)\Delta+1}^{(i-1)\Delta+M}\, \Big\vert \, Z_i\, \Big)=H\Big(W_{(i-1)\Delta+1}^{(i-1)\Delta+M}\, , \, Z_i\, \Big)-H(Z_i)
\overset{(a)}{=}H\Big(W_{(i-1)\Delta+1}^{(i-1)\Delta+M}\Big)-H(Z_i) \ , 
\end{align}
where $(a)$ follows from that $Z_i$ is a function of $W_{(i-1)\Delta+1}^{(i-1)\Delta+M}$.
For a given $l\in[K_c]$ and field size $q$, the relation $\sum\nolimits_{k=(i-1)\Delta+1}^{(i-1)\Delta+M} \alpha_k^{(l)} W_k$ ensures that $G_{X_i}$ has $q$ independent sets where each such  
set $U_i$ contains $q^{M-1}$ different values of $X_i$. 
Exploiting that $W_k$ is i.i.d. and uniform over $\mathbb{F}_q$, each element of $Z_i$ is uniform over $\mathbb{F}_q$. Hence, the achievable sum-rate is upper bounded by 
\begin{align}
\label{achievable_rate_trivialupperbound_linear_function_independent_set}
\sum\limits_{i=1}^{N_r}\, \min\limits_{Z_i\, : \, g_i\in \mathcal{C}_i} H_{G^{\cup}_{X_i}}(X_i) \leq K_c N_r \ .
\end{align} 
Exploiting the cyclic placement model, we can tighten the bound in (\ref{achievable_rate_trivialupperbound_linear_function_independent_set}). Note that server $i=1$ can help recover $M$ subfunctions (at most, i.e., $M$ transmissions needed to recover $M$ subfunctions), and each of servers $i\in[2: N_r]$ can help recover an additional $\Delta$ subfunctions (at most, i.e., $\Delta$ transmissions needed to recover $\Delta$ subfunctions). Hence, the set of servers $[N_r]$ suffices to provide $M+(Nr-1)\Delta=N\Delta=K$ subfunctions and reconstruct any desired function of $W_{\mathcal{K}}$. 
Due to cyclic placement, each $W_k$ is stored in exactly $N-N_r+1$ servers. 
Now, let us consider the following four scenarios:

(i) When $1\leq K_c <\Delta$, it is sufficient for each server to transmit $K_c$ linearly independent combinations of their subfunctions. This leads to resolving $K_c N_r$ linear combinations of $K$ subfunctions from $N_r$ servers that are sufficient to derive the demanded $K_c$ linear functions. Because $K_c N_r<\Delta N_r$, there are $K-K_c N_r>\Delta (N-N_r)=M-\Delta$ unresolved linear combinations of $K$ subfunctions.   

(ii) When $\Delta\leq K_c\leq \Delta N_r$, it is sufficient for each server to transmit at most $\Delta$ linearly independent combinations of their subfunctions. This leads to resolving $\Delta N_r$ linear combinations of $K$ subfunctions and $\Delta (N-N_r)=M-\Delta$ unresolved linear combinations of $K$ subfunctions. 

(iii) When $\Delta N_r<K_c\leq K$, each server needs to transmit at a rate $\frac{K_c}{N_r}$ where $\frac{K_c}{N_r}>\Delta$ and  $\frac{K_c}{N_r}\leq \frac{K}{N_r}=\Delta\big(\frac{N_r+N-N_r}{N_r}\big)=\Delta + \Delta\big(\frac{N-N_r}{N_r}\big)$, which gives the number of linearly independent combinations needed to meet the demand. This yields a sum-rate of $K_c$. The subset of servers may need to provide up to an additional $\Delta(N-N_r)$ linear combinations, and $\Delta\big(\frac{N-N_r}{N_r}\big)$ defines the maximum number of additional linear combinations per server, i.e., the required number of combinations when $K_c=K$. 

(iv) When $K<K_c$, it is easy to note that since any $K$ linearly independent equation in (\ref{linear_combination_transmissions_of_linear_encodings}) suffices to recover $W_{\mathcal{K}}$, the sum-rate $K$ is achievable. 

From (i)-(iv), we obtain the following upper bound on the achievable sum-rate:
\begin{align}
\label{achievable_rate_tighterupperbound_linearly_separable_different_ranges_Kc}
\sum\limits_{i=1}^{N_r}\, \min\limits_{Z_i\, : \, g_i\in \mathcal{C}_i} H_{G^{\cup}_{X_i}}(X_i)=\begin{cases}
K_c N_r\ ,\quad 1\leq K_c <\Delta \ ,\\
\Delta N_r\ ,\quad \Delta \leq K_c \leq \Delta Nr \ ,\\
K_c \ ,\quad \Delta Nr<K_c\leq K \ , \\
K \ ,\quad K<K_c \ ,
\end{cases}
\end{align}
where it is easy to note that (\ref{achievable_rate_tighterupperbound_linearly_separable_different_ranges_Kc}) matches the communication cost in \cite[Theorem 2]{wan2021distributed}. 
The i.i.d. distribution assumption for $W_k$, ensures that this result holds for any $q\geq 2$.

\subsection{Proof of Proposition~\ref{prop:general_placement__correlated_Boolean_function}}
\label{proof-prop:general_placement__correlated_Boolean_function}

Similarly as in Theorem~\ref{theo_cyclic_placement_general_source-general_function_graph-rate_UB}, we let  $G^{\cup}_{X_i}=\bigcup_{j\in[K_c]} G_{X_i,j}$ denote the {\em union characteristic graph} that server $i\in\Omega$ builds for computing $\{f_j(W_{\mathcal{K}})\}_{j\in[K_c]}$. 
Note that given $W_{\mathcal{Z}_i}=W_{1+(i-1)\Delta},W_{1+(i-1)\Delta+1},\dots, W_{M+(i-1)\Delta}$, the support set of server $i\in \Omega$ has a cardinality of $\mathcal{X}_i=2^M$. 
Because the user demand is a collection of Boolean functions, in this scenario, each server $i\in\Omega$ builds a graph with $2$ independent sets at most, denoted by $s_0(G^{\cup}_{X_i})$ and $s_1(G^{\cup}_{X_i})$, yielding the function values $Z_i=0$ and $Z_i=1$, respectively.

Given the recovery threshold $N_r$, any subset $\mathcal{S}$ of servers with $|\mathcal{S}|=N_r$ stores the set 
$\mathcal{K}$, which is sufficient to compute the demanded functions. 
Given server $i\in\Omega$, consider the set of all $w_{\mathcal{Z}_i}\in W_{\mathcal{Z}_i}$  that satisfies
\begin{align}
\label{Boolean_indep_set}
f(w_{\mathcal{Z}_i},w_{\mathcal{Z}_{\mathcal{S}}\backslash \mathcal{Z}_i})=1 \ ,\quad \forall w_{\mathcal{Z}_{\mathcal{S}}\backslash \mathcal{Z}_i}\in\{0,1\}^{|\mathcal{K}\backslash \mathcal{Z}_i|} \ ,
\end{align}
where notation $w_{\mathcal{Z}_{\mathcal{S}}\backslash \mathcal{Z}_i}$ denotes the dataset values for the set of datasets stored in the subset of servers $\mathcal{S}\backslash i$. 
Note in general that $K_n(\mathcal{S})=|\mathcal{Z}_{\mathcal{S}}|=\big|\bigcup\nolimits_{i\in \mathcal{S}} \mathcal{Z}_i\big|$. In the case of cyclic placement based on  (\ref{cyclic_Zi}), out of the set of all datasets $\mathcal{K}$, there are $\Delta$ datasets that belong exclusively to server $i\in\Omega$. In this case, $|\mathcal{K}\backslash \mathcal{Z}_i|=K-\Delta$. 

Note that (\ref{Boolean_indep_set}) captures the independent set $ s_1(G^{\cup}_{X_i})\ni w_{\mathcal{Z}_i}$.  
Equivalently, the set of dataset values %
$W_{\mathcal{Z}_i}$ that lands in $s_1(G^{\cup}_{X_i})$ of $G^{\cup}_{X_i}$, yields $Z_i=1$. 
The transmitted information takes the value $Z_i=1$ with a probability 
\begin{align}
\mathbb{P}(Z_i=1)=\mathbb{P}(W_{\mathcal{Z}_i}\in s_1(G^{\cup}_{X_i}))\ , \quad i\in \Omega \ ,
\end{align}
using which the upper bound on the achievable sum rate can be determined.

\subsection{Proof of Proposition~\ref{prop:multi_shot_multilinear_function}}
\label{proof-prop:multi_shot_multilinear_function}
Recall that $W_k\sim {\rm Bern}(\epsilon)$ are i.i.d. across $k\in [K]$, and each server has a capacity $M=\Delta(N-N_r+1)$. This means that given the number of datasets $K$, each server can compute the product of $\Delta(N-N_r+1)$ subfunctions and hence, the minimum number of servers to evaluate the multi-linear function $f(W_{\mathcal{K}})=\prod\nolimits_{k\in [K]} W_k$ is $N^*=\left\lfloor \frac{N}{N-N_r+1}\right\rfloor$ such that 
given its capacity  $M=|\mathcal{Z}_i|$, %
each server can compute the product of a disjoint set of $M$ subfunctions, i.e., $\prod\nolimits_{k\in \mathcal{Z}_i} W_k$, which operates at a rate of $R_i\geq h(\epsilon_M)$, $i\in\Omega$. 
Exploiting the characteristic graph approach, we build $G_{X_1} = (V_{X_1}\, ,\, E_{X_1})$ for $X_1$, with respect to variables $X_{\Omega}\backslash X_1=X_2,\dots, X_N$ and $f(W_{\mathcal{K}})$, and similarly for other servers to characterize the sum-rate for the computation by evaluating the entropy of each graph.

To evaluate the first term in (\ref{sum_rate_multishot_multilinear}), we choose a total of $N^*$ servers with a disjoint set of subfunctions. We denote the selected set of servers by $\mathcal{N}^*\subseteq \Omega$, and the collective computation rate of these $N^*$ servers, as a function of the conditional graph entropies of these servers, becomes
\begin{align}
\label{sum_rate_cond_graph_entropies}
\sum\limits_{i\in \mathcal{N}^*} R_i&\overset{(a)}{\leq} H_{G_{X_{i_1}}}(X_{i_1})+H_{G_{X_{i_2}}}(X_{i_2}\,\vert\, Z_{i_1})+\dots+H_{G_{X_{i_{N^*}}}}(X_{i_{N^*}}\, \vert\, Z_{i_1},Z_{i_2},\dots, Z_{i_{N^*-1}})\nonumber\\
&\overset{(b)}{=}h(\epsilon_M)+\epsilon_M h(\epsilon_M)+(\epsilon_M)^2 h(\epsilon_M)+\cdots+(\epsilon_M)^{N^*-1} h(\epsilon_M)\nonumber\\
&\overset{(c)}{=}\frac{1-(\epsilon_M)^{N^*}}{1-\epsilon_M}\cdot h(\epsilon_M) \ ,
\end{align}
where $(a)$ follows from assuming $\mathcal{S}=\{i_1,i_2,\cdots,i_{N^*}\}$ with no loss of generality, and $(b)$ from that the rate of server $i_l\in \mathcal{S}$ is positive only when $\prod\nolimits_{i\in[i_{l-1}]}\prod\nolimits_{k\in\mathcal{Z}_i}W_k=1$, which is true with probability $(\epsilon_M)^{l-1}$. Finally, $(c)$ follows from employing the sum of the terms in the geometric series, i.e., $\sum\nolimits_{l=0}^{N^*-1} (\epsilon_M)^l=\frac{1-(\epsilon_M)^{N^*}}{1-\epsilon_M}$. 
\footnote{While Proposition~\ref{prop:multi_shot_multilinear_function} uses the conditional graph entropies, the statements of Theorem~\ref{theo_cyclic_placement_general_source-general_function_graph-rate_UB} and Proposition~\ref{prop_cyclic_placement_iid_uniform_source-linear_function_graph-rate_UB}, and Proposition~\ref{prop:general_placement__correlated_Boolean_function} do not %
take into account the notion of conditional graph entropies. 
However, as indicated in 
Subsection~\ref{ex:binary_lin_sep} for computing linearly separable functions, and in Subsection~\ref{ex_prop:multi_shot_multilinear_function} for computing multi-linear functions, respectively, we used the conditional entropy-based sum rate in (\ref{sum_rate_cond_graph_entropies}) to evaluate and illustrate the achievable gains over \cite{wan2021distributed} and \cite{SlepWolf1973}.}

In the case of $\Delta_N=N-N^*\cdot (N-N_r+1)>0$, the product of $K$ subfunctions cannot be determined by $N^*$ servers and we need %
additional servers $\mathcal{I}^*\in\Omega$ 
to aid the computation and determine the outcome of $f(W_{\mathcal{K}})$ by computing the product of the remaining $\Deltaprod$ subfunctions. 
In other words, if $\Delta_N>0$ and $\prod\nolimits_{i\in\mathcal{S}}\prod\nolimits_{k\in\mathcal{Z}_i}W_k=1$, the $(N^*+1)$-th server determines the outcome of $f(W_{\mathcal{K}})$ by computing the product of subfunctions $W_k\sim {\rm Bern}(\epsilon)$, $k\in \big[N-\Deltaprod+1:N\big]$, that cannot be captured by the previous $N^*$ servers. Hence, the additional rate, given by the second term in (\ref{sum_rate_multishot_multilinear}), is given by the product of the term
\begin{align}
(\epsilon_M)^{N^*}=\mathbb{P}\Big(\prod\limits_{i\in\mathcal{S}}\prod\limits_{k\in\mathcal{Z}_i}W_k=1\Big) \ , 
\end{align}
with %
$1_{\Delta_N>0}$, and %
$h\big(\epsilon_{\Deltaprod}\big)$. 
Combining this rate term with (\ref{sum_rate_cond_graph_entropies}), we prove the statement of the proposition.


\begin{spacing}{0.9}
\bibliographystyle{IEEEtran}
\bibliography{Derya}
\end{spacing}
\end{document}